\definecolor{newblue}{rgb}{0.19, 0.55, 0.91}
\title{Optimal Sketching for Residual Error Estimation for Matrix and Vector Norms}
\author{Yi Li \\
\small{Division of Mathematical Sciences}\\
  \small{Nanyang Technological University}\\
  \small{\texttt{yili@ntu.edu.sg}} \\
  \and
  Honghao Lin $\qquad \qquad$ David P. Woodruff\\
  \small{Computer Science Department}\\
  \small{Carnegie Mellon University} \\
  \small{\texttt{\{honghaol, dwoodruf\}@andrew.cmu.edu}}
}
\newcommand{\eps}{\varepsilon}
\newtheorem{theorem}{Theorem}[section]
\newtheorem{corollary}[theorem]{Corollary}
\newtheorem{lemma}[theorem]{Lemma}
\newtheorem{definition}[theorem]{Definition}
\newcommand{\Abs}[1]{\left|#1\right|}
\def\1{\bm{1}}
\newcommand{\cD}{\mathcal{D}}
\newcommand{\R}{\mathbb{R}}
\renewcommand{\tilde}{\widetilde}
\newcommand{\norm}[1]{\left\|#1\right\|}
\newcommand{\norminf}[1]{\norm{#1}_\infty}
\providecommand{\expect}[2]{\ensuremath{\ifthenelse{\equal{#1}{}}{\mathbb{E}}{\mathbb{E}_{#1}}\!\left[#2\right]}\xspace}
\providecommand{\prob}[2]{\ensuremath{\ifthenelse{\equal{#1}{}}{\Pr}{\Pr_{#1}}\!\left[#2\right]}\xspace}
\newcommand{\nnz}{\mathrm{nnz}\xspace}
\newcommand{\inner}[1]{\langle #1\rangle}
\DeclareMathOperator{\poly}{poly}
\newcolumntype{L}[1]{>{\raggedright\let\newline\\\arraybackslash\hspace{0pt}}m{#1}}
\newcolumntype{C}[1]{>{\centering\let\newline\\\arraybackslash\hspace{0pt}}m{#1}}
\newcolumntype{R}[1]{>{\raggedleft\let\newline\\\arraybackslash\hspace{0pt}}m{#1}}
\begin{document}

\maketitle

\begin{abstract}
We study the problem of residual error estimation for matrix and vector norms using a linear sketch. Such estimates can be used, for example, to quickly assess how useful a more expensive low-rank approximation computation will be. The matrix case concerns the Frobenius norm and the task is to approximate the $k$-residual $\|A - A_k\|_F$ of the input matrix $A$ within a $(1+\eps)$-factor, where $A_k$ is the optimal rank-$k$ approximation. We provide a tight bound of $\Theta(k^2/\eps^4)$ on the size of bilinear sketches, which have the form of a matrix product $SAT$. This improves the previous $O(k^2/\eps^6)$ upper bound in (Andoni et al. SODA 2013) and gives the first non-trivial lower bound, to the best of our knowledge. 
In our algorithm, our sketching matrices $S$ and $T$ can both be sparse matrices, allowing for a very fast update time. 
We demonstrate that this gives a substantial advantage empirically, for roughly the same sketch size and accuracy as in previous work. 

\looseness=-1 For the vector case, we consider the $\ell_p$-norm for $p>2$, where the task is to approximate the $k$-residual $\|x - x_k\|_p$ up to a constant factor, where $x_k$ is the optimal $k$-sparse approximation to $x$. Such vector norms are frequently studied in the data stream literature and are useful for finding frequent items or so-called heavy hitters. We establish an upper bound of $O(k^{2/p}n^{1-2/p}\operatorname{poly}(\log n))$ for constant $\eps$ on the dimension of a linear sketch for this problem. Our algorithm can be extended to the $\ell_p$ sparse recovery problem with the same sketching dimension, which seems to be the first such bound for $p > 2$. We also show an $\Omega(k^{2/p}n^{1-2/p})$ lower bound for the sparse recovery problem, which is tight up to a $\poly(\log n)$ factor.

\end{abstract}

\section{Introduction}
Low-rank approximation is a fundamental task for which, given an $m \times n$ matrix $A$, one computes a rank-$k$ matrix $B$ for which $\|A-B\|_F^2$ is small. This works well in practice since it is often the case that matrices are close to being low rank, and only have large rank because of a small amount of noise. Also, $B$ provides a significant compression of $A$, involving only $(m+n)k$ parameters (if $B$ is represented in factored form) rather than $mn$, which then makes it quicker to compute matrix-vector products and so on. 

\looseness=-1 However, computing a low-rank approximation can be expensive if $m$ and $n$ are large. While there exist fast sketching techniques, see, e.g., \cite{w14}, such techniques would still require $\Omega(k(m+n))$ memory even to write down the output $B$ in factored form, and a stronger lower bound of $\Omega(k(m+n)/\eps)$ exists in the data stream model \cite{cw09}, even if the rows or columns appear in order \cite{w14b}. Here $\eps$ is the desired accuracy, so one should have $\|A-B\|_F \leq (1+\eps)\|A-A_k\|_F$, where $A_k$ is the best rank-$k$ approximation to $A$ in Frobenius norm.

Given that it is expensive both in time and memory to compute a low-rank approximation, it is natural to first ask if there is value in doing so. We would like $\|A-B\|_F$ to be as small as possible, and much less than $\|A\|_F$ so that $B$ accurately represents $A$. One way to do this is to try to first estimate the {\it residual error} of $\|A-A_k\|_F$, which potentially one can do with an amount of memory depending only on $k$ and $\eps$. Indeed, this is precisely what \cite{AN13} show, namely, that by computing a sketch $SAT$, where $S$ and $T$ are random Gaussian matrices each with small dimension $O(k/\eps^3)$, ignoring logarithmic factors, that $\|(SAT)-(SAT)_k\|_F = (1 \pm \eps)\|A-A_k\|_F$. Note that it is important that $S$ and $T$ do not depend on $A$ itself, as one may not have access to the parts of $A$ in a stream needed when applying $S$ and $T$, or $A$ may be distributed across multiple servers, etc. That is, $S$ and $T$ are said to be {\it oblivious} sketches, which is the focus of this paper. Thus, with only $O(k^2/\eps^6)$ memory words in a stream, one can first figure out if it is worthwhile to later  compute a low-rank approximation to $A$.

There are several weaknesses to the above result. First, it is unclear if the $O(k/\eps^3)$ dimension bounds are tight, and optimizing them is especially important for small $\eps$. Can the upper bound be improved? Also, the only known lower bound on the small dimension of sketches $S$ and $T$ and even when one requires an estimator of the form $\|(SAT)-(SAT)_k\|_F$, as far as we are aware, is $\Omega(k + 1/\eps^2)$. This follows from both $S$ and $T$ needing to preserve rank and preserve the norm of a fixed vector. Such a lower bound is even more unclear if we are allowed an arbitrary recovery procedure $f(S, T, SAT)$. Second, the running time is not ideal, as each entry of $A$ in the stream needs to be multiplied on the left and right by dense Gaussian sketches. 

Residual error is not only a useful concept for matrices, but can capture how useful a sparse approximation is for vectors, which is the standard way of compressing a vector using compressed sensing. In this case, one would like to compute a sketch $S \cdot x$ of an underlying $n$-dimensional vector $x$ so that one can output a $k$-sparse vector $\hat{x}$ for which $\|x-\hat{x}\|_p \leq (1+\eps)\|x-x_k\|_p$, where $x_k$ is the best $k$-sparse approximation to $x$, namely, the vector formed by taking the $k$ coordinates of $x$ of the largest absolute value. Here $\|z\|_p = (\sum_{i=1}^n |z_i|^p)^{1/p}$ is the vector $p$-norm. Just like for low-rank approximation, one could ask if it is worthwhile to compute $\hat{x}$, and that can be determined based on whether $\|x-x_k\|_p$ is small. This is also useful in the data stream literature, and is referred to as the heavy hitters estimation problem with tail error, and $p > 2$ enables to find heavy hitters that are even ``less heavy" than those for $\ell_2$, see, e.g., \cite{berinde}, or residual heavy hitters, see, e.g., Section 4 of \cite{HNO08} for applications to entropy estimation in a stream as well as \cite{i04} for applications to computational geometry in a stream.

For $1 \leq p \leq 2$ there are very efficient sketches to compute $\hat{x}$ itself, which are sparse and only involve a sketching dimension of $O(k/\eps^2)$ up to logarithmic factors \cite{pw11}. Notably, for the $\ell_p$ sparse recovery problem where $p \in \{1, 2\}$ and we are required to output a vector $\hat{x}$, an $\Omega(k\log(n/k))$ lower bound on the sketching dimension holds~\cite{BIPW10, pw11}, while if we only need to estimate $\|x-x_k\|_p$, following the idea of~\cite{IPW11} where we first perform a dimensionality reduction to $\poly(k/\eps)$ dimensions, one can show that $O(k\log(k))$ measurements suffice, which shows a separation between the two problems. Motivated by this, we focus on residual norm estimation for $p > 2$, which is a choice of $p$ that has been the focus of a long line of work on frequency moments in a data stream, starting with \cite{AMS99}. For such $p$, it is not even known what the right dependence on $n$ and $k$ is, so we focus on constant $\eps$ for this problem. 

\subsection{Our Contributions}
For residual norm estimation for low-rank approximation by sketches of the form $SAT$ and estimators of the form $\|SAT - [SAT]_k\|_F$, we improve the bound of \cite{AN13}, showing that both $S$ and $T$ can have small dimension $O(k/\eps^2)$, up to logarithmic factors, rather than $O(k/\eps^3)$. Moreover, our sketch can be the composition of a CountSketch and a Gaussian matrix \cite{CW13}, or use OSNAP \cite{nn12} as analyzed by Cohen \cite{c16} to achieve faster runtime or update time in a stream. We complement this upper bound with a matching lower bound for  bilinear sketches $SAT$ and an arbitrary recovery procedure $f(S, T, SAT)$, where we show that both $S$ and $T$ need to have $\Omega(k/\eps^2)$ small dimension, matching our upper bound. 

For the residual vector norm estimation problem, we in fact design the first sparse recovery algorithms, i.e., compressed sensing algorithms, for $p > 2$, designing a sketching matrix $S$ with $n^{1-2/p} k^{2/p} \poly(\log n)$ small dimension for recovering the vector $\hat{x}$ itself. By running a standard sketching algorithm for $p$-norm estimation in parallel, we can thus estimate $\|x-\hat{x}\|_p$ to evaluate the residual cost. We show that at least for the sparse recovery problem there is a nearly matching lower bound on the sketching dimension of $n^{1-2/p} k^{2/p}$. While we do not resolve the sketching dimension of residual norm estimation, a lower bound of $n^{1-2/p}$ follows from previous work \cite{BJKS04} and so a polynomial dependence on $n$ is required. 

Finally, we empirically evaluate our residual norm estimation algorithm for low-rank approximation on real data sets, showing that while we achieve similar error for the same sketching dimension, our sketch is 4 to 7 times faster to evaluate. 

\section{Preliminaries}

\noindent\textbf{Notation.} For an $n \times d$ matrix $A$, let $A_k$ denote the best rank-$k$ approximation of $A$ and $\sigma_1(A) \ge \sigma_2(A) \ge \ldots \ge \sigma_s(A)$ denote its singular values where $s = \min\{n, d\}$. We then have $\|A - A_k\|_F^2 = \sum_{i = k + 1}^s \sigma_i^2$. Given a vector $x \in \R^n$, let $\|x\|_p = \left(\sum_i |x_i|^p\right)^{1/p}$ denote the $p$-norm of the vector $x$. Let $x_k \in \R^n$ denote the best $k$-sparse vector approximation to $x$,  such that we keep the top $k$ coordinates of $x$ in absolute value and make the remaining coordinates $0$. Let $x_{-k} = x - x_k$. We next formally define the problems and models we consider. 


\noindent\textbf{Low-Rank Residual Error Estimation.} Given a matrix $A \in \R^{n \times d}$, our goal is to estimate the rank-$k$ residual error $\|A - A_k\|_F$ within a $(1 \pm \eps)$ multiplicative factor, where $A_k$ is the best rank-$k$ approximation to $A$.


\noindent\textbf{Residual Vector Norm Estimation.} 
In the data stream model, we assume there is an underlying {\it frequency vector} $x\in \mathbb{Z}^n$, initialized to $0^n$, which evolves throughout the course of a stream. The stream consists of updates of the form $(i,w)$, meaning $x_i\gets x_i + w$. 
At the end of the stream, we are asked to approximate $f(x)$ for a function $f:\mathbb{Z}^n\to \R$. In our problem, $f(x)$ is the residual $\ell_p$ norm $\|x - x_k\|_p$. Our goal is to estimate this value within a $(1 \pm \eps)$ factor.


\noindent\textbf{$\ell_p$ Sparse Recovery.} Rather than estimating the value of $\|x - x_k\|_p$, here our goal is to output a $k$-sparse vector $\hat{x} \in \R^n$ such that $\|x - \hat{x}\|_p \le (1 \pm \eps) \|x - x_k\|_p$.

\noindent\textbf{$\ell_p$ Norm Estimation.} In the $\ell_p$ norm estimation problem, the goal is to approximate $\|x\|_p$ within a $(1 \pm \eps)$ factor. This problem is well-understood and has space complexity $\Theta(n^{1 - 2/p}/\poly(\eps))$, see, e.g.,~\cite{AKO11,GW18}. The algorithms for this problem will be used in our algorithm as a subroutine.  

Next, we review the Count-Sketch algorithm for frequency estimation.

\noindent\textbf{Count-Sketch.} We have $k$ distinct hash functions $h_i: [n] \to [B]$ and an array $C$ of size $k \times B$. Additionally, we have $k$ sign functions $g_i: [n] \to \{-1, 1\}$. The algorithm maintains $C$ such that $C[\ell, b] = \sum_{j: h_\ell(j) = b} x_j$. The frequency estimation $\hat{x}_i$ of $x_i$ is defined to be the median of $\{g_\ell(i) \cdot C[\ell, h_\ell(i)]\}_{\ell \le k}$

Finally, we mention some classical results regarding singular values.
\begin{lemma}[Extreme singular values of Gaussian random matrices {\cite[Theorem 4.6.1]{vershynin}}] \label{lem:gaussian_singular_values}
    Let $G$ be an $m\times n$ ($m\geq n$) Gaussian random matrix of i.i.d.\ $N(0,1)$ entries. It holds with probability at least $1-\exp(-ct^2)$ that $\sqrt{m} - C\sqrt{n} - t \leq \sigma_{\min}(G)\leq \sigma_{\max}(G)\leq \sqrt{m} + C\sqrt{n} + t$, where $C,c>0$ are absolute constants.
\end{lemma}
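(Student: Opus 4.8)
The plan is to derive the two-sided singular-value bound from two ingredients: (i) Gaussian concentration of measure, which reduces the high-probability statement to a bound on the \emph{expected} extreme singular values, and (ii) Gaussian process comparison (Sudakov--Fernique and Gordon), which supplies the sharp expectation bounds with leading coefficient exactly $1$ in front of $\sqrt m$.

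First I would note that $G \mapsto \sigma_{\max}(G)$ and $G \mapsto \sigma_{\min}(G)$ are $1$-Lipschitz as maps $\R^{m\times n} \to \R$ when the domain carries the Frobenius norm: by Weyl's inequality, $|\sigma_i(G) - \sigma_i(G')| \le \norm{G - G'} \le \fnorm{G - G'}$. Viewing the $mn$ i.i.d.\ $N(0,1)$ entries of $G$ as a standard Gaussian vector and applying the Gaussian concentration inequality (Gaussian isoperimetry / Borell--TIS), we get, for every $t \ge 0$,
$$\Pr\!\left[\sigma_{\max}(G) > \E\,\sigma_{\max}(G) + t\right] \le e^{-t^2/2}, \qquad \Pr\!\left[\sigma_{\min}(G) < \E\,\sigma_{\min}(G) - t\right] \le e^{-t^2/2}.$$
Hence it suffices to prove $\E\,\sigma_{\max}(G) \le \sqrt m + \sqrt n$ and $\E\,\sigma_{\min}(G) \ge \sqrt{m-1} - \sqrt n$; the lemma then follows from a union bound over the two failure events and routine constant bookkeeping (absorbing the additive $\sqrt n$ and the lower-order term into $C\sqrt n$, and $2e^{-t^2/2}$ into $e^{-ct^2}$).

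For the expectations I would write $\sigma_{\max}(G) = \sup_{u \in S^{n-1}} \sup_{v \in S^{m-1}} v^\top G u$ and $\sigma_{\min}(G) = \inf_{u\in S^{n-1}} \sup_{v\in S^{m-1}} v^\top G u$, so both quantities are extrema of the centered Gaussian process $X_{u,v} = v^\top G u$ on $S^{n-1}\times S^{m-1}$, and compare with $Y_{u,v} = \inner{g,u} + \inner{h,v}$ for independent $g \sim N(0,I_n)$, $h \sim N(0,I_m)$. The required increment comparison holds on the product of spheres: $\E(X_{u,v} - X_{u',v'})^2 \le \E(Y_{u,v} - Y_{u',v'})^2$, with equality when $u = u'$, since after a one-line expansion the difference equals $2(1 - \inner{u,u'})(1 - \inner{v,v'}) \ge 0$ by Cauchy--Schwarz. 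Sudakov--Fernique then gives $\E\,\sigma_{\max}(G) \le \E\sup_{u,v} Y_{u,v} = \E\norm{g} + \E\norm{h} \le \sqrt n + \sqrt m$, using $\E\norm{g} \le \sqrt{\E\norm{g}^2} = \sqrt n$ (Jensen); Gordon's min--max inequality gives $\E\,\sigma_{\min}(G) \ge \E\inf_u\sup_v Y_{u,v} = \E\norm{h} - \E\norm{g} \ge \sqrt{m-1} - \sqrt n$, using $(\E\norm{h})^2 = \E\norm{h}^2 - \Var(\norm{h}) \ge m - 1$ (since $\norm{\cdot}$ is $1$-Lipschitz, $\Var(\norm{h}) \le 1$).

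The hard part is getting the leading constant exactly $1$. The naive route --- take a $\tfrac14$-net of $S^{n-1}\times S^{m-1}$, bound each fixed bilinear form $v^\top G u \sim N(0,1)$ by a Gaussian tail, and union bound over the $9^{n+m}$ net points --- only yields $\sigma_{\max}(G) \le C(\sqrt m + \sqrt n) + t$ with a constant $C > 1$, which is too lossy for the $(1\pm\eps)$-accuracy regime where this lemma is used (there $m$ is large and one needs $\sigma_{\max}, \sigma_{\min}$ pinned to $\sqrt m\,(1 \pm o(1))$). Beating the constant requires genuinely exploiting Gaussianity, which is exactly what the comparison inequalities provide and what fails for general sub-Gaussian matrices; everything else (Lipschitz concentration, the moment estimates for $\E\norm{N(0,I)}$, the constant juggling) is routine.
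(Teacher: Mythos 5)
Your proof is correct, but it is not the proof of the theorem the paper actually cites, and one of your side remarks about the alternative is inaccurate. The paper invokes Vershynin's Theorem~4.6.1, whose proof is an $\varepsilon$-net argument, not Gaussian comparison: one takes a single net $\mathcal{N}$ of $S^{n-1}$ (not a product of two nets), observes that for fixed $u\in S^{n-1}$ the quantity $\norm{Gu}_2^2 - m = \sum_{i=1}^m\bigl(\inner{G_i,u}^2 - 1\bigr)$ is a sum of mean-zero sub-exponential random variables, applies Bernstein's inequality and a union bound over $\mathcal{N}$ to obtain $\sup_{u\in\mathcal{N}}\bigl|\norm{Gu}_2^2 - m\bigr| \le K m\max(\delta,\delta^2)$ for $\delta = C(\sqrt{n/m} + t/\sqrt m)$, and then uses the elementary implication $|s^2 - m| \le m\max(\delta,\delta^2) \Rightarrow |s-\sqrt m| \le \delta\sqrt m$. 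Because the concentration is for $\norm{Gu}_2^2$ around its exact mean $m$, the leading coefficient of $\sqrt m$ comes out as exactly $1$; the net only contaminates the lower-order $C\sqrt n + t$ terms. So your claim that ``the naive net route only yields $C(\sqrt m + \sqrt n) + t$ with $C>1$'' is true for the two-sided net on $S^{n-1}\times S^{m-1}$ with pointwise Gaussian tails on $v^\top G u$, but it does not apply to the one-sided net with Bernstein that the cited theorem actually uses. What you have written instead is essentially Vershynin's Theorem~7.3.2 / Exercise~7.3.3: Lipschitz concentration to reduce to expectations, Sudakov--Fernique for $\E\sigma_{\max}$, Gordon for $\E\sigma_{\min}$, and the increment computation $\E(X_{u,v}-X_{u',v'})^2 - \E(Y_{u,v}-Y_{u',v'})^2 = -2(1-\inner{u,u'})(1-\inner{v,v'}) \le 0$ is correct. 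That route buys the sharp constant $1$ in front of \emph{both} $\sqrt m$ and $\sqrt n$, but it is genuinely restricted to the Gaussian case; the Bernstein-plus-net route gives a constant $C>1$ on $\sqrt n$ only, which is all the lemma as stated asserts and all the paper uses, and it generalizes to matrices with sub-Gaussian rows. For the paper's application ($m = k/\eps^2$, $n = k$), only the leading coefficient on $\sqrt m$ must be exactly $1$, and both routes deliver that.
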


\begin{lemma}[Weyl's inequality {\cite[(7.3.13)]{HJ12}}] \label{lem:weyl}
    Let $A,B$ be $m\times n$ matrices with $m\leq n$. It holds that $\sigma_{i+j-1}(A+B)\leq \sigma_i(A) + \sigma_j(B)$ for all $1\leq i, j, i+j-1\leq m$.
\end{lemma}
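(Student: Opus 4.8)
The plan is to derive Weyl's inequality from the description of singular values as operator-norm distances to low-rank matrices, rather than from the Courant--Fischer min-max formulas (though that route works equally well). The single ingredient I would use is the Eckart--Young--Mirsky-type identity
\[
  \sigma_{\ell+1}(M) \;=\; \min_{\operatorname{rank}(X)\le \ell} \sigma_1(M-X)
\]
for any matrix $M$ and any integer $\ell\ge 0$, combined with two elementary facts: $\sigma_1(\cdot)$ is the operator norm and hence subadditive, $\sigma_1(P+Q)\le\sigma_1(P)+\sigma_1(Q)$, and the rank is subadditive, $\operatorname{rank}(X+Y)\le\operatorname{rank}(X)+\operatorname{rank}(Y)$.

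First I would establish the identity. For the ``$\ge$'' direction, if $\operatorname{rank}(X)\le\ell$ then $\ker X$ has dimension at least $n-\ell$ (taking $M$ to be $m\times n$); intersecting $\ker X$ with the $(\ell+1)$-dimensional span of the top $\ell+1$ right singular vectors of $M$ — which is nonzero since $(n-\ell)+(\ell+1)>n$ — yields a unit vector $v$ with $Xv=0$ and $\|Mv\|_2\ge\sigma_{\ell+1}(M)$, so $\sigma_1(M-X)\ge\|(M-X)v\|_2=\|Mv\|_2\ge\sigma_{\ell+1}(M)$. For ``$\le$'', the truncated SVD $X=\sum_{i\le\ell}\sigma_i(M)u_iv_i^\top$ has rank at most $\ell$ and $\sigma_1(M-X)=\sigma_{\ell+1}(M)$. (Alternatively one can quote this as standard, or bypass it entirely by applying the Hermitian version of Weyl's inequality to the dilations $\left(\begin{smallmatrix}0&A\\ A^\top&0\end{smallmatrix}\right)$ and $\left(\begin{smallmatrix}0&B\\ B^\top&0\end{smallmatrix}\right)$, whose positive eigenvalues are exactly the singular values of $A$ and $B$.)

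Then I would combine: given indices with $1\le i,j$ and $i+j-1\le m$, choose $X$ of rank at most $i-1$ attaining $\sigma_1(A-X)=\sigma_i(A)$ and $Y$ of rank at most $j-1$ attaining $\sigma_1(B-Y)=\sigma_j(B)$. Since $X+Y$ has rank at most $i+j-2$, applying the identity to $A+B$ with $\ell=i+j-2$ and then using subadditivity of $\sigma_1$ gives
\[
  \sigma_{i+j-1}(A+B)\;\le\;\sigma_1\big((A+B)-(X+Y)\big)\;\le\;\sigma_1(A-X)+\sigma_1(B-Y)\;=\;\sigma_i(A)+\sigma_j(B),
\]
which is the claim.

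The only point requiring care, and the step I would double-check, is the index bookkeeping in the rectangular case: one must verify that $i-1$ and $j-1$ are legitimate ranks (ensured by $i+j-1\le m=\min\{m,n\}$, so the low-rank minimizers exist and are attained) and that the dimension count in the ``$\ge$'' direction is strict, so the intersection is genuinely nonzero; the degenerate subcase $\ell+1>\min\{m,n\}$ is trivial since then $\sigma_{\ell+1}=0$. Everything else is routine, and since the statement is classical I would in the write-up simply cite \cite{HJ12} as the paper does, keeping the argument above in reserve.
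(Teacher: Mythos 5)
The paper does not actually prove this lemma; it is stated as a classical fact and cited directly to Horn and Johnson \cite{HJ12}, so there is no in-paper argument to compare against. Your derivation is correct and complete: reducing Weyl's singular-value inequality to the Eckart--Young characterization $\sigma_{\ell+1}(M)=\min_{\operatorname{rank}(X)\le\ell}\sigma_1(M-X)$, and then combining subadditivity of rank with subadditivity of the operator norm, is a standard and clean route. Your index bookkeeping is also sound: $i+j-1\le m=\min\{m,n\}$ guarantees both that the rank-$(i-1)$ and rank-$(j-1)$ minimizers exist and that the $(i+j-1)$-dimensional span of top right singular vectors is genuinely that large, so the dimension-count intersection is nonzero; and the degenerate case $\ell+1>\min\{m,n\}$ is correctly dismissed as trivial. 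The alternative you flag in passing --- applying the Hermitian form of Weyl's inequality to the dilation $\bigl[\begin{smallmatrix}0 & A\\ A^\top & 0\end{smallmatrix}\bigr]$, whose nonzero eigenvalues are $\pm\sigma_i(A)$ --- is in fact the route closer to what \cite{HJ12} itself presents, but both arguments are equally valid, and simply citing the reference, as the paper does and as you propose, is the appropriate choice for a result at this level.
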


\section{Low-Rank Residual Error Estimation} 
In this section, we consider the low-rank error estimation problem where our goal is to estimate the rank-$k$ residual error $\|A - A_k\|_F$, where $A_k$ is the best rank-$k$ approximation of $A$.

\paragraph{Lower Bound.} We show that for any random $S\in \R^{s\times n}$, $T\in \R^{n\times t}$, suppose that with high constant probability we can recover $\|A - A_k\|_F$ within a factor of a $(1 + \eps)$. We must then have that $s= \Omega(k/\eps^2)$ and $t = \Omega(k/\eps^2)$. Formally, we have the following theorem.

\begin{theorem}
    \label{thm:lower_bound}
    Suppose that for random $S\in \R^{s\times n}$, $T\in \R^{n\times t}$, there exists an algorithm $\mathcal{A}$ satisfying that $\mathcal{A}(S, T, SAT) = (1 \pm \eps)\|A - A_k\|_F$ for an arbitrary $A\in \R^{n\times n}$ with high constant probability. Then it must hold that $s, t =\Omega(k/\eps^2)$. 
\end{theorem}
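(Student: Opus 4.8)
The plan is to prove the bound on $s$ by an information‑theoretic argument against a carefully chosen hard distribution of inputs $A$; the bound on $t$ then follows by transpose symmetry (apply the $s$‑bound to $A^\top$ with the roles of $S$ and $T$ interchanged, using $\|A^\top-(A^\top)_k\|_F=\|A-A_k\|_F$). Since a lower bound on $s$ should hold for every $t$, I would prove it in the regime most favorable to the algorithm, $t\ge n$ with $T$ of full row rank; there $SAT$ and $SA$ carry the same information, so the question reduces to: how large must $s$ be so that $SA$, for oblivious random $S\in\R^{s\times n}$, together with an arbitrary recovery function, determines $\|A-A_k\|_F$ up to $1\pm\eps$?

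First I would fix the hard instances. Take $n$ large and set $A=M\cdot UV^\top+W$, where $U,V\in\R^{n\times k}$ have orthonormal columns, $\operatorname{col}(U)=:\mathcal U$ is Haar‑random, $M\to\infty$ is a huge scaling (``the head''), and $W$ is a small ``tail'' of rank $\Theta(k)$ whose column space is a random $\Theta(k)$‑dimensional subspace placed at ``poised'' principal angles to $\mathcal U$. Because $M$ is huge and $W$ is bounded, a perturbation argument (Weyl's inequality, Lemma~\ref{lem:weyl}, together with the $M\to\infty$ limit) shows that the top‑$k$ left singular subspace of $A$ is exactly $\mathcal U$ up to error negligible compared to $\eps\|A-A_k\|_F$, so that $A_k=P_{\mathcal U}A$ and
\[
\|A-A_k\|_F^2=\|P_{\mathcal U^\perp}W\|_F^2=\|W\|_F^2-\|P_{\mathcal U}W\|_F^2 .
\]
Thus the residual is essentially a function of the squared principal angles between $\operatorname{col}(W)$ and the \emph{hidden} subspace $\mathcal U$, and $W$ is tuned so that the residual sits in a regime where an additive‑$\delta$ perturbation of those angles changes it by a $\Theta(\delta)$ relative factor.

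Next I would analyze what $SA$ reveals. Since $M$ is huge, $SA=M\,SUV^\top+SW$ effectively hands the algorithm the subspace $S\mathcal U=\operatorname{col}(SU)\subseteq\R^{s}$ (e.g.\ via $[SA]_k$) and the matrix $SW$, but not $\mathcal U$ or $W$ themselves; the only access to the true angles between $\operatorname{col}(W)$ and $\mathcal U$ is through the angles between $SW$ and $S\mathcal U$ in $\R^s$. The distortion between the true geometry in $\R^n$ and the sketched geometry in $\R^s$ is controlled by how far $S$ is from an isometry on the $\Theta(k)$‑dimensional subspace $\mathcal U+\operatorname{col}(W)$, and this is exactly where Lemma~\ref{lem:gaussian_singular_values} enters: for Gaussian $S$, the restriction of $S/\sqrt s$ to a fixed—hence, by rotational invariance, to a random—$\Theta(k)$‑dimensional subspace has all singular values in $[\,1-c\sqrt{k/s},\,1+C\sqrt{k/s}\,]$ and, crucially for a lower bound, genuinely spread by $\Theta(\sqrt{k/s})$, so it fails to be a $(1\pm\eps)$‑embedding of that subspace once $s\ll k/\eps^2$. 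Hence, when $s<ck/\eps^2$, the sketched angles differ from the true ones by $\Omega(\sqrt{k/s})\gg\eps$, and—because $\mathcal U,W$ are random and the fibers of $\mathcal U,W$ under $S$ leave $\Omega(\sqrt{k/s})$ worth of freedom in those angles—conditioned on everything the algorithm sees, $\|A-A_k\|_F$ remains spread over a $(1\pm\eps)$‑range. No recovery function can then output $(1\pm\eps)\|A-A_k\|_F$ with the required probability, forcing $s=\Omega(k/\eps^2)$, and the transposed argument gives $t=\Omega(k/\eps^2)$.

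The main obstacle is the last step: turning ``$S$ is not a $(1\pm\eps)$ subspace embedding of a random $\Theta(k)$‑dimensional subspace'' into ``$\|A-A_k\|_F$ is information‑theoretically irrecoverable from $SA$'' for an \emph{arbitrary} oblivious $S$ (not just Gaussian) and an \emph{arbitrary} estimator. For general $S$ one wants to use the Haar‑randomness of $\mathcal U$ so that only the behavior of $S$ on random subspaces matters, reducing effectively to the Gaussian calculation of Lemma~\ref{lem:gaussian_singular_values}; making this symmetrization rigorous, and showing the residual's conditional spread is truly $\Omega(\sqrt{k/s})$ rather than accidentally concentrated, is the technical heart. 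Two further points needing care are (i) the Weyl/$M\to\infty$ perturbation bound guaranteeing $A_k=P_{\mathcal U}A$ to sufficient accuracy, and (ii) choosing the rank and scale of $W$ so that a $\Theta(\delta)$ angular distortion is a $\Theta(\delta)$ relative error in the residual, which is precisely what converts $\sqrt{k/s}\lesssim\eps$ into the claimed bound.
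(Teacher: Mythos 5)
Your proposal is conceptually interesting but takes a genuinely different route from the paper, and it leaves a real gap that the paper's proof is specifically engineered to avoid.

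The paper's proof proceeds in two stages. First, it proves an $\Omega(k^2/\eps^2)$ lower bound for \emph{general} linear sketches $S\cdot\mathrm{vec}(M)$ of a rectangular matrix $M\in\R^{k/\eps^2\times k}$ (Theorem~\ref{thm:lower_bound_one_side}). The hard pair of distributions is a spiked-Gaussian construction, $G+c\sqrt{\eps}B$ versus $G+c\sqrt{\eps}B+c\alpha\sqrt{\eps}\,uv^\top$, and the indistinguishability is imported wholesale from Theorem~4 of~\cite{LW16} (Gaussian versus Gaussian-plus-rank-one), with Lemma~\ref{lem:gaussian_singular_values} and Weyl's inequality used only to show that a $(1\pm\eps)$-estimator of the rank-$(k-1)$ residual (equivalently the smallest singular value) would separate the two distributions. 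Second, the paper converts the general-sketch bound into the bilinear bound via a padding trick: embed $M$ as $A=[M\ \ \mathbf{0}]$, so that $SAT=SMT'$ where $T'$ (the first $k$ rows of $T$) has rank at most $k$; after passing to a maximal independent set of columns one has $SMT''$ with at most $k$ columns, hence at most $s\cdot k$ linear measurements of $M$, giving $sk=\Omega(k^2/\eps^2)$ and so $s=\Omega(k/\eps^2)$. The $t$-bound is the transposed padding.

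Your route instead fixes $t\geq n$ with $T$ of full row rank and tries to lower bound $s$ for $SA$ directly via a ``hidden head subspace plus small tail'' instance, reducing the problem to whether $S$ distorts principal angles between $\mathcal{U}$ and $\mathrm{col}(W)$. Two issues. First, once you throw away $T$, the sketch $SA$ has $s\cdot n$ real numbers, so there is no hope of applying a general linear-sketch lower bound; you must exploit the bilinear (column-wise) structure of $SA$. The paper's padding step is precisely what fixes this: it forces the effective number of measurements to be $s\cdot k$, not $s\cdot n$. Second, and more fundamentally, the step you flag yourself as the ``technical heart''---upgrading ``$S$ fails to $(1\pm\eps)$-embed a random $\Theta(k)$-dimensional subspace when $s\ll k/\eps^2$'' into ``$\|A-A_k\|_F$ has $\Omega(\eps)$ conditional relative spread given $SA$, for arbitrary $S$ and arbitrary estimator''---is not a routine symmetrization. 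Lemma~\ref{lem:gaussian_singular_values} controls singular values of a Gaussian $S$, but an adversarially chosen $S$ need not look Gaussian on the randomness in your instance; you would need either a total-variation bound (as the paper gets from~\cite{LW16} after fixing $S$ by Yao's minimax) or an anti-concentration statement for the conditional posterior of the residual, and neither is supplied. So as written the proposal establishes the necessity of a step it does not prove, whereas the paper offloads exactly that step to a known distributional result.
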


To achieve this, we first show an $\Omega(k^2/\eps^2)$ sketching dimension lower bound for a general sketching algorithm where the sketch has the form $S \cdot \mathrm{vec}(A)$ where $S \in \mathbb{R}^{s \times n^2}$. Then we will show that this implies an $\Omega(k^2/\eps^4)$ lower bound for bilinear sketches.  
Consider the following two matrix distributions
\begin{equation}
\label{eqn:distribution}
G + c \sqrt{\eps} B \qquad\quad \text{and} \qquad\quad G + c\sqrt{\eps}B + c \alpha \sqrt{\eps} \cdot u v^\top
\end{equation}
where 
\begin{enumerate}[itemsep=0pt,topsep=0pt,parsep=0pt,leftmargin=0.5cm]
    \item $G \in \R^{k/\eps^2 \times k}$ is a random Gaussian matrix of i.i.d.\ $N(0, 1)$ entries. 
    \item $\alpha$ is sampled from the distribution of the $k$-th singular values of a random Gaussian matrix $H \in \R^{k/\eps^2 \times k}$, $u, v$ are uniformly random unit vectors and $c$ is a sufficiently large constant. 
    \item $B = \sum_{i = 1}^{k - 1} \alpha_i u_i v_i^\top$ where $\alpha_i, u_i, v_i$ are sampled from the distribution of the $i$-th singular values and singular vectors of $H \in \R^{k/\eps^2 \times k}$ conditioned on its $k$-th singular value and singular vectors being $\alpha, u, v$.
\end{enumerate}

We will first show that any linear sketching algorithm with dimension smaller than $O(k^2 / \eps^2)$ cannot distinguish the two distributions with high constant probability. To do so,
notice that by Yao’s minimax principle, we can fix the rows of our sketching matrix, and show that the resulting distributions of the sketch have small total variation distance. By standard properties of the variation distance, this implies that no estimation procedure $\mathcal{A}$ can be used to distinguish the two distributions with sufficiently large probability. Let $\mathcal{L}_1$ and $\mathcal{L}_2$ be the corresponding distribution of the linear sketch on $\mathcal{D}_1$ and $\mathcal{D}_2$ in~\eqref{eqn:distribution}, respectively. Formally, we have

\begin{lemma}\label{lem:lb_tvd}
    Suppose that the sketching dimension of $\mathcal{L}_1$ and $\mathcal{L}_2$ is smaller than $c_1 k^2/\eps^2$ for some constant $c_1$. Then we have that $d_{TV}(\mathcal{L}_1, \mathcal{L}_2) \le 1/8$.
\end{lemma}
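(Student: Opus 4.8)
The plan is to reduce the problem to a single Gaussian "spiked model" computation. The two distributions in \eqref{eqn:distribution} differ only in the rank-one term $c\alpha\sqrt{\eps}\,uv^\top$, but the matrices $G + c\sqrt{\eps}B$ and $G + c\sqrt{\eps}B + c\alpha\sqrt{\eps}\,uv^\top$ are both, conditioned on the values $\alpha,\alpha_i$ and the singular vectors, just a fixed matrix plus an $N(0,1)$ Gaussian matrix $G$. Fix a row $w\in\R^{k^2}$ of the sketching matrix (viewing the sketch as acting on $\mathrm{vec}(\cdot)$). Then $\langle w,\mathrm{vec}(G+c\sqrt\eps B)\rangle$ and $\langle w,\mathrm{vec}(G+c\sqrt\eps B+c\alpha\sqrt\eps\,uv^\top)\rangle$ are both one-dimensional Gaussians with the same variance $\|w\|_2^2$ (coming from $G$) and means that differ by $c\alpha\sqrt\eps\langle w,\mathrm{vec}(uv^\top)\rangle$. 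Extending to $r$ rows stacked into $S\in\R^{r\times k^2}$, $\mathcal{L}_1$ and $\mathcal{L}_2$ are (after conditioning) multivariate Gaussians $N(\mu_1,SS^\top)$ and $N(\mu_2,SS^\top)$ with $\mu_2-\mu_1 = c\alpha\sqrt\eps\, S\,\mathrm{vec}(uv^\top)$. By Yao's principle it suffices to bound $d_{TV}$ for this fixed $S$, and for Gaussians with shared covariance $\Sigma=SS^\top$ we have the clean bound $d_{TV}(N(\mu_1,\Sigma),N(\mu_2,\Sigma)) \le \tfrac12\|\Sigma^{-1/2}(\mu_2-\mu_1)\|_2$ (Pinsker / explicit KL), so the task becomes bounding $\E\big[\, c\alpha\sqrt\eps\,\|(SS^\top)^{-1/2}S\,\mathrm{vec}(uv^\top)\|_2\,\big]$, where the expectation is over $u,v$ (uniform unit vectors) and $\alpha$.

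**Key steps.** First, I would set up the conditioning carefully: condition on $\alpha$ and on $B$'s singular data, so that the only randomness left inside the sketch is $G$ (Gaussian, contributing the covariance) plus the external randomness of $u,v$. Second, I would invoke the Gaussian-shared-covariance $d_{TV}$ bound to reduce to controlling $\|P\,\mathrm{vec}(uv^\top)\|_2$ where $P = (SS^\top)^{-1/2}S$ is the projection-type operator onto the row space of $S$ (so $P$ has orthonormal rows, $r$ of them, $r < c_1 k^2/\eps^2$). Third — the crux — I would bound $\E_{u,v}\|P\,\mathrm{vec}(uv^\top)\|_2$. Since $P$ has $r$ orthonormal rows, $\E_{u,v}\|P\,\mathrm{vec}(uv^\top)\|_2^2 = \sum_{j=1}^r \E_{u,v}\langle P_j,\mathrm{vec}(uv^\top)\rangle^2$. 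Each term is $\langle P_j, \E_{u,v}[\mathrm{vec}(uv^\top)\mathrm{vec}(uv^\top)^\top]\,P_j\rangle$; for $u\in\R^{k/\eps^2}$, $v\in\R^{k}$ independent uniform, $\mathrm{vec}(uv^\top)=u\otimes v$ and $\E[(u\otimes v)(u\otimes v)^\top] = \E[uu^\top]\otimes\E[vv^\top] = \tfrac{1}{k/\eps^2}\cdot\tfrac1k\, I = \tfrac{\eps^2}{k^2}I$. Hence $\E_{u,v}\|P\,\mathrm{vec}(uv^\top)\|_2^2 = \tfrac{\eps^2}{k^2}\cdot r < \tfrac{\eps^2}{k^2}\cdot c_1\tfrac{k^2}{\eps^2} = c_1$, so $\E_{u,v}\|P\,\mathrm{vec}(uv^\top)\|_2 \le \sqrt{c_1}$. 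Fourth, I would handle $\alpha$: by Lemma~\ref{lem:gaussian_singular_values} applied to $H\in\R^{k/\eps^2\times k}$, $\alpha=\sigma_k(H) = \Theta(\sqrt{k}/\eps)$ with high probability, and in particular $\E[\alpha]\lesssim \sqrt k/\eps$ (one can also just truncate at the $O(\sqrt k/\eps)$ scale). Combining, the expected $d_{TV}$ (before conditioning, i.e.\ averaging over $u,v,\alpha$) is at most $\tfrac12 \cdot c\sqrt\eps\cdot O(\sqrt k/\eps)\cdot\sqrt{c_1} = O(c\sqrt{c_1 k/\eps})$ — and here I realize the scaling needs the sketch dimension bound to be $c_1 k^2/\eps^2$ with $c_1$ small relative to $1/(c^2 k/\eps\cdot\text{const})$...

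Let me restate this more carefully: with $r < c_1 k^2/\eps^2$ rows we get $\E\|P\,\mathrm{vec}(uv^\top)\|_2 \le \sqrt{c_1 k^2/\eps^2}\cdot\eps/k = \sqrt{c_1}$, then multiplying by $c\alpha\sqrt\eps \le c\cdot O(\sqrt k/\eps)\cdot\sqrt\eps = O(c\sqrt{k/\eps})$ gives $d_{TV} \le O(c\sqrt{c_1 k/\eps})$, which is $\le 1/8$ only if $c_1 \le O(1/(c^2 k/\eps))$ — too restrictive. So the clean computation above must be off by the factor $k/\eps$: the resolution is that $\alpha = \sigma_k(H)$ is not $\Theta(\sqrt k/\eps)$ but rather $\Theta(\eps^{-1}) - \Theta(\sqrt k) \approx$ a *small* singular value, in fact on the order of... actually for a $k/\eps^2 \times k$ matrix with $k/\eps^2 \gg k$, $\sigma_{\min}=\sigma_k \approx \sqrt{k/\eps^2} - \sqrt{k} = \frac{\sqrt k}{\eps}(1-\eps)$, so it *is* $\Theta(\sqrt k/\eps)$. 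The correct fix, then, is that the sketch operates on $\mathrm{vec}(A)$ where $A\in\R^{k/\eps^2\times k}$ has $k^2/\eps^2$ entries, and the interesting bound is dimension $c_1 k^2/\eps^2$; rechecking, $\E\|P\mathrm{vec}(uv^\top)\|^2 = (\eps^2/k^2)\cdot r$. For this to kill a mean of size $c\alpha\sqrt\eps \sim c\sqrt k/\sqrt\eps$ we'd need... $(c\sqrt k/\sqrt\eps)\cdot(\eps/k)\sqrt r = c\sqrt\eps\sqrt r/\sqrt k \le 1/8$, i.e.\ $r \le k/(64 c^2\eps)$. That gives only an $\Omega(k/\eps)$ lower bound on the full sketch, i.e.\ $\Omega(\sqrt{k/\eps})$ per side — not matching.

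**Main obstacle.** The real technical heart, therefore, is \emph{not} the averaged $d_{TV}$ bound (which gives a weaker result) but a more refined argument: one must use that, conditioned on $u,v,\alpha$, the distributions are Gaussians, and then bound the total variation distance between the \emph{mixtures} $\mathcal{L}_1 = \E_{u,v,\alpha}[N(S\,\mathrm{vec}(\text{fixed}_1),SS^\top)]$ and $\mathcal{L}_2 = \E_{u,v,\alpha}[N(S\,\mathrm{vec}(\text{fixed}_1) + c\alpha\sqrt\eps S\,\mathrm{vec}(uv^\top),SS^\top)]$. The standard technique here (as in Andoni–Nguyen–Polyanskiy–Wu and in lower bounds for Gaussian spiked models) is a chi-squared / second-moment computation: bound $\chi^2(\mathcal{L}_2\|\mathcal{L}_1)$ by expanding the likelihood ratio, which for Gaussian mixtures with a common covariance reduces to $\E_{(u,v,\alpha),(u',v',\alpha')}\big[\exp\big(c^2\alpha\alpha'\eps\,\langle \mathrm{vec}(uv^\top), (SS^\top)^{\dagger}_{\text{whitened}}\,\mathrm{vec}(u'v'^\top)\rangle\big)\big] - 1$ — i.e.\ an exponential moment of the inner product of two independent random rank-one directions projected through $P^\top P$. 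The gain of the extra $k/\eps$ factor comes from the concentration of this bilinear form: $\langle P\,\mathrm{vec}(uv^\top), P\,\mathrm{vec}(u'v'^\top)\rangle$ has mean zero and variance $\approx r\eps^4/k^4$, so it is typically of order $\sqrt r \eps^2/k^2 \ll 1$, and its exponential moment is $1 + O(r\eps^4/k^4\cdot c^4(k/\eps)^2) = 1 + O(c^4 r\eps^2/k^2)$, which is $1+o(1)$ precisely when $r = o(k^2/(c^4\eps^2))$. So the plan is: (i) condition to reduce to Gaussians; (ii) write $d_{TV}\le \sqrt{\chi^2/ \text{const}}$ via Cauchy–Schwarz; (iii) compute $\chi^2$ as the above double expectation over two independent copies of the rank-one spike; (iv) bound the exponential moment using the orthonormality of $P$'s rows (giving the $\eps^2/k^2$ per-coordinate scale), the independence of $u,v,u',v'$, and the $\Theta(\sqrt k/\eps)$ size of $\alpha,\alpha'$ from Lemma~\ref{lem:gaussian_singular_values}, concluding $\chi^2 \le c_1\cdot(\text{const})$ and hence $d_{TV}\le 1/8$ when $c_1$ is a small enough constant. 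The hard part is controlling the exponential moment — i.e.\ showing the bilinear form in the exponent has light enough tails — which requires either a truncation of $\alpha,\alpha'$ (licensed by Lemma~\ref{lem:gaussian_singular_values}) and concentration of $\langle u,u'\rangle\langle v,v'\rangle$ for random unit vectors, or a direct moment expansion; this is where the specific choice of $B$ and the spiked-Wishart structure of the construction earn their keep.
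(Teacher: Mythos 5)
Your proposal correctly diagnoses that the naive shared-covariance Gaussian $d_{TV}$ bound (controlling only the norm of the mean difference) loses a factor of roughly $k/\eps$ and gives only an $\Omega(k/\eps)$ total sketch-size lower bound; and your pivot to a second-moment / $\chi^2$ computation is the right instinct — that is essentially the technique underlying the result you would need. However, the paper does not carry out such a computation at all. Instead it invokes Theorem~4 of \cite{LW16} as a black box, which already establishes $d_{TV}\bigl(S(G),\, S(G + \beta\sqrt{\eps/k}\,wz^\top)\bigr)\le 1/10$ for Gaussian $G$, Gaussian vectors $w,z$, and any linear sketch $S$ of dimension below $c_1 k^2/\eps^2$; the remaining work in the paper is a coupling/conditioning argument that matches the present rank-one perturbation $c\alpha\sqrt{\eps}\,uv^\top$ (with $u,v$ uniform unit vectors and $\alpha=\Theta(\sqrt{k}/\eps)$, by Lemma~\ref{lem:gaussian_singular_values}) to the \cite{LW16} spike $\beta\sqrt{\eps/k}\,wz^\top$, using that $\beta\sqrt{\eps/k}\,\norm{w}_2\norm{z}_2 = \Theta(\sqrt{k/\eps}) = \Theta(\alpha\sqrt{\eps})$ with high probability and that after conditioning on $B$ both distributions are a common shift $c\sqrt{\eps}S\,\mathrm{vec}(B)$ of the \cite{LW16} model. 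In other words, you are attempting to reprove \cite{LW16} from scratch rather than cite it, which is legitimate in principle but substantially harder, and your sketch of that harder route leaves the technical heart (the exponential-moment bound over two independent copies of the spike) explicitly unfinished.

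Beyond incompleteness, there is a structural imprecision in the $\chi^2$ setup that you should tighten if you want to push the from-scratch route through. You write both $\mathcal{L}_1$ and $\mathcal{L}_2$ as mixtures over $(u,v,\alpha)$ with fixed means, but this does not reflect the construction: $\mathcal{D}_2 = G + c\sqrt{\eps}(B + \alpha u v^\top)$ is \emph{exactly} an i.i.d.\ $N(0,1+c^2\eps)$ Gaussian matrix (since $B + \alpha u v^\top$ reconstitutes the Gaussian matrix $H$), so $\mathcal{L}_2$ is a single Gaussian with no mixing, while $\mathcal{L}_1$ is a mixture over the random rank-$(k-1)$ matrix $B$. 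A clean $\chi^2$ argument therefore should take $\mathcal{L}_2$ as the ``null'' and integrate over the randomness of $B$ on the $\mathcal{L}_1$ side (or, equivalently, condition on $B$ first — as the paper does — which turns the question into precisely the \cite{LW16} rank-one spiked model with a uniform random direction and a concentrated spike size, with the shared shift $c\sqrt{\eps}S\,\mathrm{vec}(B)$ canceling in the $d_{TV}$). Conditioning on $(u,v,\alpha)$, as you propose, leaves $B$ as a lingering source of non-Gaussian randomness on both sides and obscures the clean rank-one structure that makes the spectral / $\chi^2$ computation tractable. So: right tool, correct self-diagnosis of the first attempt's weakness, but a harder route than necessary, with the decisive estimate left unproved and the conditioning set up in a way that would complicate the computation rather than simplify it.
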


\begin{proof}
    We first consider the following two distributions. 
    \[
G {\ \ \ \ \ \text{and} \ \ \ \ \ G + \beta \sqrt{\frac{\eps}{k}} w z^\top} 
\]
where $G \in \R^{k/\eps^2 \times k}$ is a Gaussian random matrix and $w \in \R^{k/\eps^2}$, $z \in \R^{k}$ are random Gaussian vectors, and $\beta$ is a sufficiently large constant. From Theorem 4 of~\cite{LW16}, we have that if the sketching dimension is smaller than $c_1 k^2 / \eps^2$, then 
$$d_{TV}(S(G), S(G + \beta \sqrt{\frac{\eps}{k}} w z^\top)) \le 1/ 10$$ where $S(M)$ 
is the sketch on $M$. 
Let $S_{\mathrm{good}} \subseteq \R^{k/\eps^2 \times k}$ be the subset of matrix $B$ where
\[
\Pr [\alpha = \Theta(\sqrt{k}/\eps)\ | \ B] \ge 0.99.
\]
We claim that $\Pr[B\in S_{\mathrm{good}}]\geq 0.99$. Otherwise, $\Pr[\alpha \not\in \Theta(\sqrt{k}/\eps)] \ge \Pr[\alpha \not\in \Theta(\sqrt{k}/\eps) \mid B\not\in S_{\mathrm{good}}]\Pr[B\not\in S_{\mathrm{good}}] \geq 0.01 \cdot 0.01$, 
contradicting the fact following from Lemma~\ref{lem:gaussian_singular_values} that $\Pr[\alpha \not\in  \Theta(\sqrt{k}/\eps)] \leq \exp(-\Omega(k/\eps^2))$.

Now, fix a $B \in S_{\mathrm{good}}$. Recall that $\alpha$ is dependent with $B$ while $u, v$ is independent with $B$, and combined with the concentration property of the $\ell_2$ norm of a Gaussian vector, we get that with probability at least $0.98$, we have that $\beta \sqrt{\frac{\eps}{k}} \norm{w}_2 \norm{z}_2 = \Theta(\alpha \sqrt{\eps})$, which implies 
$$d_{TV}(S(G + c\sqrt{\eps} B), S(G + c\sqrt{\eps} B + c \alpha \sqrt{\eps} \cdot u v^\top)) \le 1/ 9 $$

for such fixed $B$. Since $\Pr[B \in S_{\mathrm{good}}] \ge 0.99$, from the definition of total variation distance we have that $d_{TV}(\mathcal{L}_1, \mathcal{L}_2) \le 1/9 + 0.01 < 1/8$.
\end{proof}

We next show that if we have an algorithm $\mathcal{A}$ that computes a $(1 \pm \eps)$-approximation to the rank-$(k - 1)$ residual error, we then can distinguish the two distributions $\cD_1$ and $\cD_2$ in~\eqref{eqn:distribution} with high probability.
\begin{theorem}
    \label{thm:lower_bound_one_side}
    Suppose that for random matrix $S \in \mathbb{R}^{s \times n^2}$, there exists an algorithm $\mathcal{A}$ satisfying that $\mathcal{A}(S, S\cdot \mathrm{vec}(A)) = (1 \pm \eps) \|A - A_k\|_F$ for an arbitrary $A\in \R^{n\times n}$ with high constant probability. Then it mush hold that $s = \Omega(k^2 / \eps^2)$.  
\end{theorem}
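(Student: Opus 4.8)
The plan is to derive a contradiction with Lemma~\ref{lem:lb_tvd}. Suppose $s< c_1 k^2/\eps^2$, where $c_1$ is the constant of Lemma~\ref{lem:lb_tvd}. First I would apply Yao's principle to the hard mixture $\tfrac12\mathcal{D}_1+\tfrac12\mathcal{D}_2$, fixing $S$ together with the internal randomness of $\mathcal{A}$; the resulting deterministic estimator is then a function of the sketch $S\cdot\mathrm{vec}(A)$ alone and returns a $(1\pm\eps)$-approximation to the residual with probability $1-o(1)$ over $A\sim\mathcal{D}_1$ as well as over $A\sim\mathcal{D}_2$. The crux is then to show that with probability $1-o(1)$ the rank-$(k-1)$ residual $\|M-M_{k-1}\|_F$ of a draw $M$ from $\mathcal{D}_2$ exceeds that of a draw from $\mathcal{D}_1$ by a factor at least $1+10\eps$ (say). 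Granting this, comparing the estimator's output against a fixed threshold lying between the two cases gives a function $\phi$ of $S\cdot\mathrm{vec}(A)$ with $|\Pr_{\mathcal{L}_1}[\phi=1]-\Pr_{\mathcal{L}_2}[\phi=1]|=1-o(1)>1/8$, contradicting Lemma~\ref{lem:lb_tvd}; hence $s=\Omega(k^2/\eps^2)$. This proves the bound for estimating the rank-$(k-1)$ residual, and re-running the construction with the index shifted by one — and zero-padding the $(k/\eps^2)\times k$ instances into $\mathbb{R}^{n\times n}$, which requires $n\gtrsim k/\eps^2$ — yields the statement as written for $\|A-A_k\|_F$, since $(k+1)^2=\Theta(k^2)$.

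It remains to establish the $1+\Omega(\eps)$ gap between the residuals, which is the heart of the proof. A draw $M$ from either distribution is a $(k/\eps^2)\times k$ matrix of rank at most $k$, so $\|M-M_{k-1}\|_F=\sigma_{\min}(M)$. On $\mathcal{D}_2$, by construction the blocks $(\alpha_i,u_i,v_i)_{i<k}$ together with $(\alpha,u,v)$ carry exactly the law of the full SVD data of the Gaussian matrix $H$, so $\tilde B:=B+\alpha uv^\top$ is distributed as $H$; hence $M=G+c\sqrt\eps\,\tilde B$ has i.i.d.\ $N(0,1+c^2\eps)$ entries, i.e.\ $M\eqdist\sqrt{1+c^2\eps}\,G'$ for a standard Gaussian $G'$. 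Lemma~\ref{lem:gaussian_singular_values} with $t=\Theta(\sqrt k)$ then gives $\sigma_{\min}(M)\ge\sqrt{1+c^2\eps}\,(\sqrt k/\eps-O(\sqrt k))\ge(1+c^2\eps/3)\,\sqrt k/\eps$ with probability $1-e^{-\Omega(k)}$, using that $c$ is a sufficiently large absolute constant. On $\mathcal{D}_1$, $B$ has rank $k-1$, so its row space misses a unit vector $v^{\perp}\in\mathbb{R}^k$; since $Bv^{\perp}=0$ and $G$ is independent of $v^{\perp}$, we get $\sigma_{\min}(M)\le\|Mv^{\perp}\|_2=\|Gv^{\perp}\|_2$, and $\|Gv^{\perp}\|_2^2\sim\chi^2_{k/\eps^2}$ concentrates to give $\sigma_{\min}(M)\le(1+\eps)\sqrt k/\eps$ with probability $1-e^{-\Omega(k)}$. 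Taking the constant $c$ in \eqref{eqn:distribution} large enough makes the ratio of these two bounds at least $1+10\eps$, as needed; note the $\mathcal{D}_1$ bound is purely an upper bound and does not depend on the precise laws of $B,\alpha,u,v$.

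I expect the one delicate point to be the quantitative bookkeeping in this last step rather than anything conceptual: the separation between the two distributions, measured on $\sigma_{\min}(M)$, is only of order $c^2\sqrt k/\sqrt\eps$, and one must check it dominates (i) the $O(\sqrt k)$-width fluctuation band of the extreme singular values of Gaussian matrices, (ii) the $O(1)$-scale fluctuations of $\|Gv^{\perp}\|_2$ about $\sqrt k/\eps$, and (iii) the multiplicative slack $(1\pm\eps)^2$ from chaining $\mathcal{A}$'s approximation guarantee with these deviation bounds. All of this is absorbed by taking $c$ a large absolute constant and $\eps$ below a constant, which is precisely the role of ``$c$ sufficiently large'' in \eqref{eqn:distribution}. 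The remaining steps are routine: intersecting the $O(1)$ many high-probability events above so their $o(1)$ failure probabilities fit inside the constant-probability slack of $\mathcal{A}$, checking that fixing $S$ via Yao's principle is compatible with the per-instance randomness used in the gap argument, and carrying out the trivial re-indexing and zero-padding that reduce the statement as written to the rank-$(k-1)$ version.
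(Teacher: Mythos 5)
Your proof is correct and follows essentially the same route as the paper's: reduce to distinguishing $\mathcal{D}_1$ from $\mathcal{D}_2$, show both draws are rank-$k$ so the residual is $\sigma_{\min}$, observe $B+\alpha uv^\top\eqdist H$ makes $M\mid\mathcal{D}_2$ an i.i.d.\ $N(0,1+c^2\eps)$ matrix, and invoke Lemma~\ref{lem:lb_tvd}. The one place you diverge is the $\mathcal{D}_1$ upper bound: the paper applies Weyl's inequality (Lemma~\ref{lem:weyl}) with $\sigma_k(c\sqrt\eps B)=0$ to get $\sigma_{\min}(M)\le\sigma_{\max}(G)\le\sqrt k/\eps+C\sqrt k$, whereas you test $M$ against a kernel vector $v^\perp$ of $B$ (independent of $G$) to get $\sigma_{\min}(M)\le\|Gv^\perp\|_2\le(1+O(\eps))\sqrt k/\eps$ via $\chi^2$ concentration. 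Both exploit $\operatorname{rank}(B)=k-1$; yours is marginally tighter and avoids Lemma~\ref{lem:weyl}, but the difference is immaterial once $c$ is large. You are also more scrupulous than the paper about the $k$-vs-$(k-1)$ bookkeeping and about the zero-padding into $\R^{n\times n}$; the paper elides both, and your explicit handling is appropriate. One small slip: the additive separation in $\sigma_{\min}$ you quote as ``order $c^2\sqrt k/\sqrt\eps$'' should be $\Theta(c^2\sqrt k)$ (the $\sqrt k/\eps$ terms cancel), but this does not affect the conclusion since $c^2\sqrt k$ still dominates the $O(\sqrt k)$ fluctuation bands for $c$ large.
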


\begin{proof}
 By Yao's minimax principle, we may assume that $A$ is drawn from the distributions in \eqref{eqn:distribution} and $S$ is a deterministic sketching matrix.
 Note that for both matrix families, the drawn matrix $A$ is rank-$k$ with high probability, and hence the rank-$(k - 1)$ residual error of $A$ is equal to its smallest singular value. We shall examine the smallest singular value for the two distributions. 
If $A\sim \mathcal{D}_1$, from Lemma~\ref{lem:gaussian_singular_values} and Weyl's inequality (Lemma~\ref{lem:weyl}), we have that with high probability
\[
\sigma_{\min} (G + c\sqrt{\eps} B) \le \sigma_{\max} (G) + \sigma_{\min} (c\sqrt{\eps}B) = \sigma_{\max}(G) \le \frac{\sqrt{k}}{\eps} + C_1\sqrt{k}
\]
since $B$ is rank $k - 1$. On the other hand, when $A\sim \mathcal{D}_2$, from the definition of $B, \alpha, u, v$, we see that $B + \alpha uv^\top$ is also a Gaussian matrix and thus $G + c\sqrt{\eps}(B + \alpha uv^\top)$ is a Gaussian random matrix with entries $N(0,1+c^2\eps)$. Again, from Lemma~\ref{lem:gaussian_singular_values}, we have with high probability that
\[
\sigma_{\min} (G + c\sqrt{\eps} B + c  \alpha \sqrt{\eps} \cdot u v^\top) \ge \sqrt{1 + c^2 \eps}  \cdot (\frac{\sqrt{k}}{\eps} - C \sqrt{k}) \ge \frac{\sqrt{k}}{\eps} + C_2 \sqrt{k},
\]
provided that $c^2/2 \geq C_2 + C\sqrt{1 + c^2\eps}$. This is satisfied by choosing $c$ to be large enough. It follows immediately that $\mathcal{A}$ can distinguish the two distributions $\mathcal{D}_1$ and $\mathcal{D}_2$ and the theorem then follows from Lemma~\ref{lem:lb_tvd}.
\end{proof}


\begin{proof} [Proof of Theorem~\ref{thm:lower_bound}]
    Suppose that $M$ is drawn from the distribution in~\eqref{eqn:distribution}. By Yao's minimax principle, we can assume that $S$ and $T$ are deterministic sketching matrices. Let $A \in \R^{k/\eps^2 \times k / \eps^2}$ be one of the following two matrices with the same probability
    \[
    \begin{bmatrix}
    M & \mathbf{0} 
\end{bmatrix} \qquad  \text{or} \qquad 
\begin{bmatrix}
    M^\top  \\
    \mathbf{0}
\end{bmatrix}.
    \]
    We first consider the case when $A = \begin{bmatrix}
    M & \mathbf{0} 
\end{bmatrix}$, then $SAT = SMT'$, where $T'$ is the first $k$ rows of $T$. Let $T''$ be a submatrix of $T'$ consisting of a maximal linearly independent subset of columns of $T'$. Then $T''$ has $t'\leq k$ columns. Furthermore, since the columns of $T'$ are linear combinations of the columns of $T''$, we can recover $SMT'$ from $SMT''$. Hence, $\mathcal{A}(S,T,SAT)$ induces an algorithm $\mathcal{A}'(S,T'',SMT'')$ of the same output, which estimates $\norm{A-A_k}_F = \norm{M-M_k}_F$ up to a $(1\pm\eps)$-factor. Next, note that the $(i,j)$-th entry of $SMT''$ is equal to $\inner{S_i {T''}_j^\top, \mathrm{vec}(M)}$. Therefore, from Theorem~\ref{thm:lower_bound_one_side}, we have that $s \cdot t' = \Omega(k^2 / \eps^2)$. Since $t'\leq k$, it follows immediately that $s = \Omega(k/\eps^2)$.

    A similar argument for $A = [\begin{smallmatrix}
    M^\top  \\
    \mathbf{0}
\end{smallmatrix}]$ yields that   $t = \Omega(k/\eps^2)$. 
\end{proof}
\paragraph{Upper Bound.} We shall give an $O(k^2 / \eps^4)$ upper bound for bilinear sketches. 
We first recall the definition of Projection-Cost Preserving sketches (PCPs).

\begin{definition}[\cite{CEM+15}]
    Given a matrix $A \in \mathbb{R}^{n \times d}$, $\eps>0, c \geq 0 $ and an integer $k\in [d]$, a sketch $S \in\mathbb{R}^{s \times n} $ is an $(\eps,c, k)$-column projection-cost preserving sketch of $A$ if for all rank-$k$ projection matrices $P$, $(1-\eps)\|A (I - P) \|_F^2\leq \| SA(I - P) \|^2_F + c \leq (1+\eps) \|A (I - P) \|_F^2.$
\end{definition}

\begin{lemma}[\cite{CEM+15, MM20}]
    \label{lem:pcps}
    Let $S \in \R^{m \times n}$ be drawn from any of the following matrix families. Then with probability $1 - \delta$, $S$ is an $(\eps,0,k)$-projection-cost-preserving sketch of $A$.
    \begin{enumerate}[itemsep=0pt,topsep=0pt,parsep=0pt,leftmargin=0.5cm]
        \item $S$ is a dense Johnson-Lindenstrauss (JL) matrix, with $c = 0$, $m = O((k + \log(1/\delta))/\eps^2)$ and each element is chosen independently and uniformly in $\pm\sqrt{1/m}$.
        \item $S$ is a \textsc{CountSketch} with $c = 0$, $m = O(k^2/(\eps^2 \delta))$, where each column has a single $\pm 1$ in a random position.
        \item $S$ is an ONSAP sparse subspace embedding matrix~\cite{NN13}, with $c = 0$ and $m = O(k \log(k/\delta) / \eps^2)$, where each column has $s = O(\log(1/\eps))$ random $\pm 1 / \sqrt{s}$. 
    \end{enumerate}
\end{lemma}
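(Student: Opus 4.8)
All three cases are established in the cited works, so I would cite those for the verification of the constituent properties and here only describe the argument. The plan is to first reduce the projection-cost-preserving guarantee to a single uniform estimate. Since $P$ is an orthogonal projection, $\fnorm{M(I-P)}^2=\fnorm{M}^2-\fnorm{MP}^2$ for every matrix $M$, and since $AP$ has rank at most $k$, one has $\fnorm{A(I-P)}^2\ge\fnorm{A-A_k}^2$ for every rank-$k$ projection $P$. It thus suffices to show that, uniformly over all rank-$k$ orthogonal projections $P$,
\[
\bigl|\,\fnorm{SA(I-P)}^2-\fnorm{A(I-P)}^2\,\bigr|\;\le\;\eps\,\fnorm{A(I-P)}^2 ,
\]
and in the course of the argument I would freely use $\fnorm{A(I-P)}^2\ge\fnorm{A-A_k}^2$.

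Next I would split $A=A_h+A_t$, where $A_h$ is the best rank-$2k$ approximation of $A$ and $A_t=A-A_h$; the column spaces of $A_h$ and $A_t$ are orthogonal, and a one-line computation gives $\sigma_1(A_t)^2=\sigma_{2k+1}(A)^2\le\frac1k\sum_{i=k+1}^{2k}\sigma_i(A)^2\le\frac1k\fnorm{A-A_k}^2$. Expanding $A(I-P)=A_h(I-P)+A_t(I-P)$ splits the error into three contributions. The \emph{head} contribution is at most $\eps\fnorm{A_h(I-P)}^2\le\eps\fnorm{A(I-P)}^2$, since $A_h(I-P)$ has column space inside the fixed $2k$-dimensional column space of $A_h$, on which each of the three families is a $(1\pm\eps)$ subspace embedding at the stated dimension. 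The \emph{cross} contribution $2\bigl|\inner{SA_h(I-P),\,SA_t(I-P)}\bigr|$ I would bound using that the unsketched inner product $\inner{A_h(I-P),\,A_t(I-P)}$ vanishes by orthogonality, so the approximate matrix multiplication (JL-moment) property of $S$---invoked with error parameter $\Theta(\eps/\sqrt k)$, which still costs only $O(k/\eps^2)$ rows---bounds it by $\eps\fnorm{A_h(I-P)}\fnorm{A_t(I-P)}\le\eps\fnorm{A(I-P)}^2$.

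The \emph{tail} contribution $\bigl|\fnorm{SA_t(I-P)}^2-\fnorm{A_t(I-P)}^2\bigr|$ is the main obstacle, because $P$ ranges over a continuum of rank-$k$ projections and the tail $A_t$ can have large rank, so it cannot be treated by a subspace embedding or a union bound. My plan is to route it through a spectral-norm bound: applying $\fnorm{M(I-P)}^2=\fnorm{M}^2-\fnorm{MP}^2$ again and $\fnorm{SA_t}^2=(1\pm\eps)\fnorm{A_t}^2$ (second moment, with $\fnorm{A_t}^2\le\fnorm{A-A_k}^2$), it remains to bound $\bigl|\fnorm{SA_tP}^2-\fnorm{A_tP}^2\bigr|=\bigl|\tr(EP)\bigr|$ with $E:=A_t^\top S^\top S A_t-A_t^\top A_t$; since $P$ is a rank-$k$ projection, $\bigl|\tr(EP)\bigr|\le k\norm{E}_2$, so it suffices to prove $\norm{E}_2\le(\eps/k)\fnorm{A-A_k}^2$. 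This spectral-norm approximate matrix multiplication bound holds at the stated dimensions because the split at rank $2k$ makes $\sigma_1(A_t)$ small: a matrix-concentration estimate controls $\norm{E}_2$ in terms of $\sigma_1(A_t)^2$, the row count of $S$, and $\mathrm{srank}(A_t)$, and combining it with the two inequalities $\fnorm{A-A_k}^2\ge k\,\sigma_1(A_t)^2$ and $\fnorm{A-A_k}^2\ge\fnorm{A_t}^2=\mathrm{srank}(A_t)\,\sigma_1(A_t)^2$, a short case split on whether $\mathrm{srank}(A_t)$ exceeds $k$ gives $k\norm{E}_2\le\eps\fnorm{A-A_k}^2$. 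Summing the three contributions and rescaling $\eps$ by a constant would finish the proof; it remains to note (citing \cite{CEM+15,NN13,MM20}) that the subspace-embedding, approximate-matrix-multiplication and spectral-norm properties used above hold for a dense JL matrix with $O((k+\log(1/\delta))/\eps^2)$ rows, for \textsc{CountSketch} with $O(k^2/(\eps^2\delta))$ rows (the quadratic dependence on $k$ forced by \textsc{CountSketch}'s subspace-embedding requirement), and for OSNAP with $O(k\log(k/\delta)/\eps^2)$ rows and $O(\log(1/\eps))$ nonzeros per column.
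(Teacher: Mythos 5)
The paper does not prove Lemma~\ref{lem:pcps} itself; it is stated as a citation to \cite{CEM+15, MM20}. Your reconstruction faithfully follows the strategy used in those works: reduce the PCP guarantee to a uniform multiplicative error bound over rank-$k$ projections via $\fnorm{M(I-P)}^2=\fnorm{M}^2-\fnorm{MP}^2$, split $A$ at rank $\Theta(k)$ so that $\sigma_1(A_t)^2\leq \frac1k\fnorm{A-A_k}^2$, handle the head by a subspace embedding, the cross term by approximate matrix multiplication (the orthogonality $A_h^\top A_t=0$ is exactly what makes this work), and route the tail through $|\tr(EP)|\leq k\norm{E}_2$ with a spectral-norm AMM estimate. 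This is precisely the head/tail decomposition of \cite{CEM+15} as streamlined in \cite{MM20}, and the parameter bookkeeping you sketch (AMM error $\eps/\sqrt k$ at $O(k/\eps^2)$ rows; spectral AMM using both $\norm{A_t}_2\leq\fnorm{A-A_k}/\sqrt k$ and $\fnorm{A_t}\leq\fnorm{A-A_k}$) closes correctly. The only place you are slightly more elaborate than necessary is the tail bound: if one uses the concentration estimate in the form $\norm{E}_2\lesssim \eps'\norm{A_t}_2\fnorm{A_t}$ with $\eps'=\Theta(\eps/\sqrt k)$, the desired $\norm{E}_2\leq(\eps/k)\fnorm{A-A_k}^2$ follows directly without the stable-rank case split. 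Since the paper offers no proof to compare against, there is nothing further to reconcile; your sketch is a correct account of the cited argument.
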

We remark that it is easy to see from the definition that if $S_1$ and $S_2$ are both $(\eps,0,k)$-PCP sketches of $A$, then $S_1 S_2$ is an $(O(\eps),0,k)$-PCP sketch of $A$.

Our algorithm is as follows. Suppose that $S$ and $T$ are matrices such that $S$ and $T^\top$ both satisfy the condition in Definition~\ref{lem:pcps}. We then compute the rank-$k$ error $\|SAT - [SAT]_k\|_F$ and use it as our final estimate. To show the correctness of our algorithm, we first prove the following lemma.

\begin{lemma}
    \label{lem:one_side}
    Suppose that $S$ is an $(\eps, c, k)$ projection-cost preserving sketch of $SA$. Then we have 
    \[
    \|A - A_k\|_F^2 \leq \|SA - [SA]_k\|_F^2 + c \le (1 + \eps) \|A - A_k\|_F^2
    \]
\end{lemma}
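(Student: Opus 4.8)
The plan is to reduce the claim to the variational description of the best rank-$k$ approximation: for any matrix $M$ one has $\|M-M_k\|_F^2 = \min_P \|M(I-P)\|_F^2$, where $P$ ranges over rank-$k$ orthogonal projection matrices and the minimum is attained at the projection onto the span of the top-$k$ right singular vectors of $M$; in particular $MP$ has rank at most $k$ for every such $P$. Reading the hypothesis as ``$S$ is an $(\eps,c,k)$-PCP of $A$'' and applying this identity to $M=A$ and $M=SA$, the lemma becomes a comparison of $\min_P\|A(I-P)\|_F^2$ with $\min_P\|SA(I-P)\|_F^2 + c$, which is exactly what the PCP inequality $(1-\eps)\|A(I-P)\|_F^2 \le \|SA(I-P)\|_F^2 + c \le (1+\eps)\|A(I-P)\|_F^2$ controls, uniformly over all rank-$k$ projections $P$.

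For the right-hand inequality I would take $P^\star$ to be an optimal rank-$k$ projection for $A$, so that $\|A(I-P^\star)\|_F^2 = \|A-A_k\|_F^2$. Since $(SA)P^\star$ has rank at most $k$, optimality of $[SA]_k$ gives $\|SA-[SA]_k\|_F^2 \le \|SA(I-P^\star)\|_F^2$, and the upper half of the PCP inequality applied at $P=P^\star$ gives $\|SA(I-P^\star)\|_F^2 + c \le (1+\eps)\|A-A_k\|_F^2$; chaining the two yields $\|SA-[SA]_k\|_F^2 + c \le (1+\eps)\|A-A_k\|_F^2$. For the left-hand inequality I would argue symmetrically from the sketched side: let $\hat P$ be an optimal rank-$k$ projection for $SA$, so $\|SA-[SA]_k\|_F^2 = \|SA(I-\hat P)\|_F^2$; the lower half of the PCP inequality applied at $P=\hat P$, together with $\|A(I-\hat P)\|_F^2 \ge \|A-A_k\|_F^2$, gives $\|SA-[SA]_k\|_F^2 + c \ge (1-\eps)\|A-A_k\|_F^2$. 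To get the exact constant $1$ on the left that the statement asks for, I would use that the sketches in Lemma~\ref{lem:pcps} all have $c=0$ and may be rescaled by $(1-\eps)^{-1/2}$, giving the one-sided form $\|A(I-P)\|_F^2 \le \|SA(I-P)\|_F^2 \le (1+O(\eps))\|A(I-P)\|_F^2$; this absorbs the $(1-\eps)$ factor at the cost of replacing $\eps$ by $O(\eps)$ on the upper side, and equivalently one simply reparametrizes $\eps$.

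I do not expect a genuine obstacle, since this is essentially the standard fact that a projection-cost-preserving sketch preserves the optimal rank-$k$ cost, carried through with the additive term $c$. The only care needed is bookkeeping: the universal quantifier in the PCP definition must be invoked at two different projections --- $P^\star$, optimal for $A$, in the upper bound, and $\hat P$, optimal for $SA$, in the lower bound --- and the left-hand normalization constant must be reconciled with the definition as above. The more substantive work around this lemma lies elsewhere: verifying that the composed sketch (\textsc{CountSketch} followed by a Gaussian, or OSNAP) is itself an $(\eps,0,k)$-PCP of the claimed dimension, and then applying this one-sided lemma twice --- once to $A$ with $S$, once to $(SA)^\top$ with $T^\top$ --- to obtain the two-sided guarantee for $\|SAT-[SAT]_k\|_F$.
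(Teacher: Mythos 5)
Your proof is correct and follows essentially the same route as the paper's: apply the PCP inequality at the two optimal rank-$k$ projections $P^\star$ (for $A$) and $\hat P$ (for $SA$), using the variational characterization $\|M-M_k\|_F^2=\min_P\|M(I-P)\|_F^2$ to translate back to $[\,\cdot\,]_k$. You also correctly flag two things the paper glosses over: the hypothesis should read ``PCP of $A$'' rather than ``of $SA$,'' and the stated PCP definition has a $(1-\eps)$ factor on the left while the lemma claims an exact lower bound of $1$, which indeed requires the rescaling (or the one-sided normalization with $c=0$) that you describe.
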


\begin{proof}
    Suppose that $\tilde{P}$ is the minimizer of $\|SA(I - P)\|_F^2$ over all rank-$k$ projections $P$, and $P^*$ is the minimizer of $\|A(I - P)\|_F^2$. Since $S$ is an $(\eps, c, k)$ projection-cost preserving sketch of $A$, we have that 
    \begin{align*}
    \|A(I - P^\ast)\|_F^2 \leq \|A(I - \tilde{P})\|_F^2 &\le \|SA(I - \tilde{P})\|_F^2 + c \\
    &\le \|SA(I - P^*)\|_F^2 + c \le (1 + \eps) \|A(I - P^*)\|_F^2 \;.
    \end{align*}
    Next, from the definitions of $\tilde{P}$ and $P^*$, we have that $\|SA - [SA]_k\|_F^2 = \|SA(I - \tilde{P})\|_F^2$ and $\|A - A_k\|_F^2 = \|A(I - P^\ast)\|_F^2$. The desired result follows.
\end{proof}


\begin{theorem}
    There exist random matrices $S \in \R^{O(k/\eps^2) \times n}$ and $T \in \R^{d \times O(k/\eps^2)}$ such that with high constant probability 
    \begin{equation}\label{eqn:SAT-approx}
    \|A - A_k\|_F^2 \leq \|SAT - [SAT]_k\|_F^2 \le (1 + \eps) \|A - A_k\|_F^2 \;.
    \end{equation}
    Moreover, the sketch $SAT$ can be computed in $\mathrm{nnz}(A) + \poly(k/\eps)$ time. 
\end{theorem}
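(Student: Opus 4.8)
The plan is to obtain \eqref{eqn:SAT-approx} by invoking Lemma~\ref{lem:one_side} twice: once with the matrix $A$ and the left sketch $S$, and once with the matrix $(SA)^\top$ and the sketch $T^\top$. Fix $\eps' = \eps/C$ for a large absolute constant $C$. I would first choose $S$ so that, with probability at least $0.99$, it is an $(\eps',0,k)$-projection-cost-preserving sketch of $A$; Lemma~\ref{lem:one_side} then yields $\fnorm{A-A_k}^2 \le \fnorm{SA-[SA]_k}^2 \le (1+\eps')\fnorm{A-A_k}^2$. Next I would choose $T$, independently of $S$, so that $T^\top$ is an $(\eps',0,k)$-PCP of the (now fixed) matrix $(SA)^\top$ with probability at least $0.99$. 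Applying Lemma~\ref{lem:one_side} to $(SA)^\top$ and $T^\top$, and using $\fnorm{N^\top - [N^\top]_k} = \fnorm{N-N_k}$ together with $T^\top (SA)^\top = (SAT)^\top$, gives $\fnorm{SA-[SA]_k}^2 \le \fnorm{SAT-[SAT]_k}^2 \le (1+\eps')\fnorm{SA-[SA]_k}^2$. Chaining the two sets of inequalities, taking a union bound over the relevant constant-probability failure events, and choosing $C$ large enough to absorb the resulting $(1+\eps')^2$ factor into $(1+\eps)$, proves the estimate.

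It remains to realize $S$ and $T$ with $O(k/\eps^2)$ rows and columns respectively and to verify the running time. For $S$ I would take $S = S_2 S_1$, where $S_1$ is a \textsc{CountSketch} with $O(k^2/\eps^2)$ rows (item~2 of Lemma~\ref{lem:pcps}, with constant failure probability) and $S_2$ is a dense $\pm 1/\sqrt{m}$ JL matrix with $m = O(k/\eps^2)$ rows (item~1); conditioning on $S_1$, the oblivious matrix $S_2$ is a PCP of $S_1 A$ with high probability, so by the remark following Lemma~\ref{lem:pcps} the composition $S = S_2 S_1$ is an $(O(\eps'),0,k)$-PCP of $A$ and has $O(k/\eps^2)$ rows. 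I would build $T = T_1 T_2$ symmetrically, with $T_1^\top$ a \textsc{CountSketch} and $T_2^\top$ a dense JL matrix, so $T$ has $O(k/\eps^2)$ columns. To compute $SAT = S_2 (S_1 A T_1) T_2$, note that since $S_1$ and $T_1$ are \textsc{CountSketch} matrices, $S_1 A T_1$ can be assembled directly in $O(\nnz(A))$ time: each nonzero entry $A_{ij}$ contributes a single signed term to one entry of the $O(k^2/\eps^2)\times O(k^2/\eps^2)$ matrix $S_1 A T_1$. Left-multiplying this by the dense $S_2$ and right-multiplying by the dense $T_2$ then costs only $\poly(k/\eps)$, for a total of $\nnz(A) + \poly(k/\eps)$.

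The step I expect to require the most care is the PCP guarantee for $T$: the matrix it must preserve, $(SA)^\top$, is itself random because $S$ is. I would resolve this with the standard obliviousness argument — condition on the realization of $S$, and use that $T$ (and its pieces $T_1, T_2$) are drawn independently of $A$ and $S$, so Lemma~\ref{lem:pcps} applies to the now-deterministic matrix $(SA)^\top$ — followed by a union bound over the constantly many failure events. Two minor points to pin down: Lemma~\ref{lem:one_side} should be read as ``$S$ is an $(\eps,c,k)$-PCP of $A$'' (which is what its proof actually uses), and its lower bound $\fnorm{A-A_k}^2 \le \fnorm{SA-[SA]_k}^2 + c$ is exact, so no error accumulates on the lower side across the two applications. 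One could alternatively instantiate $S$ and $T$ with OSNAP matrices via item~3 of Lemma~\ref{lem:pcps}, achieving the same $O(k/\eps^2)$ dimension with a different update-time/sparsity trade-off.
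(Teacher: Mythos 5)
Your proposal is correct and follows essentially the same route as the paper's proof: compose a \textsc{CountSketch} with a dense JL (or OSNAP) matrix on each side to get $O(k/\eps^2)$-dimensional PCPs of $A$ and of $(SA)^\top$, apply Lemma~\ref{lem:one_side} twice, chain the inequalities, and rescale $\eps$. You are somewhat more careful than the paper about the conditioning argument for $T^\top$ being a PCP of the random matrix $(SA)^\top$, and you correctly flag the typo in the hypothesis of Lemma~\ref{lem:one_side} (it should read ``PCP of $A$,'' not ``of $SA$''), but the substance of the argument is identical.
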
 

\begin{proof}
    We construct $S = S_1 S_2$, where $S_1 \in \R^{O(k/\eps^2) \times O(k^2/\eps^2)}$ is a JL matrix and $S_2 \in \R^{O(k^2 / \eps^2) \times n}$ is the \textsc{Count-Sketch} matrix in Lemma~\ref{lem:pcps}. Then $T^\top = T_1^\top T_2^\top$ is the same construction with $S$ but replacing $n$ with $d$. Because $S_2$ and $T_2$ are both \textsc{Count-Sketch} matrices, we can compute $S_2 A T_2$ in $\nnz(A)$ time. Then note that $S_2 A T_2$ are matrices with size $O(k^2/\eps^2) \times O(k^2 /\eps^2)$, so we can compute $SAT = S_1 S_2 A T_2 T_1$ in $\nnz(A) + \poly(k/\eps)$ time.

    We next consider the accuracy. From Lemma~\ref{lem:pcps} we have that with high constant probability $S, T$ are both rank-$k$ PCPs with error $\eps$. Hence we first have 
    \[
    \|A - A_k\|_F^2 \leq \|SA - [SA]_k\|_F^2\le (1 + \eps) \|A - A_k\|_F^2 \;.
    \]
    Applying Lemma~\ref{lem:one_side} to $T^\top$ and $A^\top S^\top$ yields that
    \[
    \|SA - [SA]_k\|_F^2 \leq \|SAT - [SAT]_k\|_F^2 \le (1 + \eps) \|SA - [SA]_k\|_F^2 \;.
    \]
    Combining the two equations above leads to 
    \[
        \|A - A_k\|_F^2 \leq \|SAT - [SAT]_k\|_F^2 \le (1 + O(\eps))\|A - A_k\|_F^2
    \]
    and the claimed result then follows immediately by rescaling $\eps$.
\end{proof}

\section{$F_p$ Residual Error Estimation}

We next consider the $F_p$ residual error estimation task where our goal is to estimate the $k$-residual error $\|x_{-k}\|_p$ up to a $(1 \pm \eps)$-factor. 

\begin{algorithm}[t]
    \caption{$(1 \pm \eps)$-approximator for $\norm{x_{-k}}_p^p$}
    \label{alg:Fp}
    \SetAlgoLined

    Set $b = \Theta(\eps^{-2p/(p - 1)} k^{2/p} n^{1-2/p})$ and $\ell = O(\log n)$\;
    Initialize $\ell \cdot b$ buckets $z_{1, 1}, \dots, z_{b, \log n}$ to $0$\;
    Initialize $\ell$ pairwise independent hash functions $h_i: [n] \rightarrow [b]$\;
    Initialize $\ell$ $4$-wise hash functions $s_j: [n] \rightarrow \{-1, 1\}$\;
    Initialize a $(1 \pm \eps)$-$\ell_p$ estimation algorithm $\mathcal{A}$ with $\tilde{O}(n^{1 - 2/p}/\poly(\eps))$ space [e.g., \cite{AKO11}]\;
    \ForEach{$(i, v)$ update comes }{
        \For{$j \gets 1$ \KwTo $\ell$}{
            $z_{h_j(i), j} \gets z_{h_j(i), j} + v \cdot s_j(i)$\;
        }
        Perform the update $(i, v)$ in $\mathcal{A}$\;
    }
    \For{$i \gets 1$ \KwTo $\ell$}{
    $\hat{x}_i = \mathrm{median}_j \{s_j(i) \cdot z_{h_j(i),j}\}$
    }
    Choose the top $k$ coordinates of $\hat{x}$ to form set $J$ \;
    \ForEach{$j \in J$ }{
        Perform the update $(j, -\hat{x}_j)$ in $\mathcal{A}$.
    }
    \KwRet{Output of $\mathcal{A}$}\;
\end{algorithm}

The algorithm is shown in Algorithm~1. At a high level, we use the classical \textsc{CountSketch} to estimate the frequency of each coordinate of $x$. Then we select the top $k$ coordinates that have the $k$ largest estimated values. In parallel we run an $F_p$ estimation algorithm $\mathcal{A}$ and then subtract these coordinates with the estimated value. We then use the output of $\mathcal{A}$ to be our final estimation of $\|x_{-k}\|_p^p$. Suppose that $I$ is the set of coordinates that are the genuine $k$ largest coordinates, and $J$ is the set of candidates we choose. In the remainder of this section, we shall show that $\|x - x_J\|_p$ is within a $(1 \pm \eps)$ factor of $\|x - x_I\|_p = \norm{x_{-k}}_p$. Hence, if we run an $F_p$ frequency estimation algorithm in parallel (see, e.g., \cite{GW18}) and after subtracting the estimated frequency $\hat{x}_j$ on each of the coordinates in $J$, we will obtain the $k$ residual error up to a $(1 \pm \eps)$ factor. 

For the purpose of exposition, we first assume $b = \Theta(\eps^{-2} k^{2/p} n^{1-2/p})$ and will show that it gives a $O(1 + \eps^{1 - 1/p})$-approximation. Then, after normalization, we get the desired bound. We first bound the error of \textsc{CountSketch} for each coordinate. The following lemma is a standard fact of \textsc{CountSketch}. The corollary below easily follows from H\"older's inequality.

\begin{lemma}[\cite{MP14}]
    Suppose that $\ell \ge c \log n$ for a constant $c$. Then with probability at least $9/10$, we have that $|\hat{x}_i - x_i| \le \|x_{-k}\|_2/\sqrt{b}$ holds for all $i$ simultaneously. 
    
\end{lemma}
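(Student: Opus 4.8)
The plan is to run the standard Count-Sketch tail analysis (in the style of Charikar--Chen--Farach-Colton): isolate one coordinate $i$, analyze a single hash row with a second-moment argument after conditioning away collisions with the $k$ heavy coordinates, then boost the per-row success probability to $1-\poly(1/n)$ by taking the median over $\ell=\Theta(\log n)$ rows, and finally union-bound over all $n$ coordinates.

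Fix $i\in[n]$ and a row $j$. By construction $s_j(i)\,z_{h_j(i),j}=x_i+Y_j$, where $Y_j=\sum_{i'\neq i} s_j(i)s_j(i')\,x_{i'}\,\mathbf{1}[h_j(i')=h_j(i)]$. Let $I$ be the set of indices of the $k$ largest coordinates of $x$, and let $E_j$ be the event that no element of $I\setminus\{i\}$ hashes under $h_j$ to the bucket $h_j(i)$. Since $h_j$ is pairwise independent, $\Pr[h_j(i')=h_j(i)]=1/b$ for every $i'\neq i$, so a union bound gives $\Pr[\overline{E_j}]\le k/b$; because $b=\Theta(\eps^{-2}k^{2/p}n^{1-2/p})$ with $n\ge k$ and $p>2$, we may take the hidden constant large enough that $b\ge 10k$, hence $\Pr[\overline{E_j}]\le 1/10$. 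Conditioned on $E_j$ and on $h_j(i)$, $Y_j$ is a signed sum over coordinates outside $I$, so $\E[Y_j\mid E_j]=0$ by pairwise independence of the signs $s_j$, and
\[
\E[Y_j^2\mid E_j]=\sum_{i'\notin I,\;i'\neq i}x_{i'}^2\,\Pr[h_j(i')=h_j(i)\mid E_j]\le\frac{1}{b-k}\sum_{i'\notin I}x_{i'}^2=\frac{\norm{x_{-k}}_2^2}{b-k},
\]
using $\Pr[h_j(i')=h_j(i)\mid E_j]\le \Pr[h_j(i')=h_j(i)]/\Pr[E_j]\le (1/b)/(1-k/b)$ and $\sum_{i'\notin I}x_{i'}^2=\norm{x_{-k}}_2^2$. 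Thus $\E[Y_j^2\mid E_j]=O(\norm{x_{-k}}_2^2/b)$, and Markov's inequality applied to $Y_j^2$ yields an absolute constant $C$ with $\Pr[\,|Y_j|>C\norm{x_{-k}}_2/\sqrt{b}\mid E_j\,]\le 1/4$. Combining with $\Pr[\overline{E_j}]\le 1/10$, each individual row estimates $x_i$ to within $C\norm{x_{-k}}_2/\sqrt{b}$ with probability at least $2/3$.

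Now take $\ell\ge c\log n$ independent rows. The median $\hat x_i$ deviates from $x_i$ by more than $C\norm{x_{-k}}_2/\sqrt{b}$ only if at least $\ell/2$ of the rows do, which by a Chernoff bound has probability at most $n^{-2}$ once $c$ is a large enough constant; a union bound over $i\in[n]$ then gives the statement with probability at least $1-1/n\ge 9/10$. To recover the bound with the constant $1$ exactly as stated, it suffices to run Count-Sketch with $C^2 b$ buckets instead of $b$ (still $\Theta(\eps^{-2}k^{2/p}n^{1-2/p})$, so no asymptotic bound changes), or, alternatively, to invoke the sharper concentration of Minton--Price \cite{MP14}: conditioned on the hash values, $Y_j$ is a Rademacher sum, so a fourth-moment argument controls its tail directly with the right constant. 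The only mildly delicate point is exactly this constant bookkeeping --- the second-moment estimate alone only gives $|\hat x_i-x_i|=O(\norm{x_{-k}}_2/\sqrt{b})$ --- together with the routine check that conditioning on $E_j$ inflates the pairwise collision probabilities by at most a bounded factor; the rest is standard.
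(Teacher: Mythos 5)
The paper does not prove this lemma itself; it is stated as a citation to \cite{MP14} and used as a black box. You supply a complete, self-contained proof via the classical Charikar--Chen--Farach-Colton Count-Sketch analysis: condition away collisions with the $k$ heavy coordinates, compute the conditional second moment of the noise, apply Markov to get a constant per-row success probability, boost via the median of $\ell = \Theta(\log n)$ rows with a Chernoff bound, and union-bound over all $n$ coordinates. This is correct, and it is genuinely a different (and more elementary) route than what \cite{MP14} itself develops --- Minton and Price prove much sharper Gaussian-type tail bounds for Count-Sketch via a moment-generating-function/Fourier argument, which is far stronger than what this lemma actually needs. Your handling of the two standard delicate points is careful and right: (i) conditioning on the no-heavy-collision event $E_j$ inflates the pairwise collision probability only by the factor $1/\Pr[E_j] \le (1-k/b)^{-1} = O(1)$ once $b \ge 10k$ (which holds since $b = \Theta(\eps^{-2}k^{2/p}n^{1-2/p}) \ge k$ for $p>2$, $n\ge k$); and (ii) the second-moment estimate only yields $|\hat x_i - x_i| = O(\|x_{-k}\|_2/\sqrt{b})$, which you correctly absorb into the hidden constant in the $\Theta(\cdot)$ defining $b$. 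The one implicit assumption worth flagging is that the final $1 - 1/n \ge 9/10$ step requires $n \ge 10$, which is of course harmless. Overall: a correct and nicely elementary proof of a lemma the paper treats as a citation.
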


\begin{corollary}
    \label{lem:error}
    If we set the number of buckets $b = \Theta(\eps^{-2} k^{2/p} n^{1-2/p})$, then with probability at least $9/10$, we have that $|\hat{x}_i - x_i| \le \eps \|x_{-k}\|_p/(c_1 k^{1/p})$, where $c_1$ is a universal constant. 
\end{corollary}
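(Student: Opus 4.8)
The plan is to reduce the corollary to the preceding \textsc{CountSketch} lemma together with a single application of H\"older's inequality (in the form of the power-mean / norm-interpolation inequality), since the only gap between the two statements is that the lemma controls the per-coordinate error in terms of $\norm{x_{-k}}_2$ while the corollary needs a bound in terms of $\norm{x_{-k}}_p$.

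Concretely, I would first invoke the lemma with $\ell = \Theta(\log n)$ hash functions, so that with probability at least $9/10$ we have $|\hat{x}_i - x_i| \le \norm{x_{-k}}_2/\sqrt{b}$ simultaneously for all $i \in [n]$. Next I would convert the $\ell_2$ norm to an $\ell_p$ norm: since $x_{-k}\in\R^n$ and $p>2$, H\"older's inequality applied with conjugate exponents $p/2$ and $p/(p-2)$ gives
\[
\norm{x_{-k}}_2^2 = \sum_i |(x_{-k})_i|^2 \le \Bigl(\sum_i |(x_{-k})_i|^p\Bigr)^{2/p} \cdot n^{1-2/p} = n^{1-2/p}\,\norm{x_{-k}}_p^2 ,
\]
that is, $\norm{x_{-k}}_2 \le n^{1/2-1/p}\norm{x_{-k}}_p$. (If desired one can replace $n$ here by $n-k$, the support size of $x_{-k}$, but this refinement is not needed.)

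Combining the two bounds yields $|\hat{x}_i - x_i| \le n^{1/2-1/p}\norm{x_{-k}}_p/\sqrt{b}$ for every $i$, and it then suffices to choose $b$ large enough that the right-hand side is at most $\eps\norm{x_{-k}}_p/(c_1 k^{1/p})$; this holds precisely when $\sqrt{b} \ge c_1 k^{1/p} n^{1/2-1/p}/\eps$, i.e.\ $b \ge c_1^2\,\eps^{-2} k^{2/p} n^{1-2/p}$, which is exactly the stated choice $b=\Theta(\eps^{-2}k^{2/p}n^{1-2/p})$ with the hidden constant taken to be $c_1^2$. There is no substantive obstacle in this argument; the only point requiring care is bookkeeping of the universal constant $c_1$, so that the $\Theta(\cdot)$ in the hypothesis is read with a constant at least $c_1^2$, and a verification that the $9/10$ success probability is inherited unchanged from the lemma.
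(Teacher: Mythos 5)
Your proposal is correct and is exactly the argument the paper intends: the paper remarks only that the corollary ``easily follows from Hölder's inequality,'' and your conversion $\norm{x_{-k}}_2 \le n^{1/2-1/p}\norm{x_{-k}}_p$ via the conjugate exponents $p/2$ and $p/(p-2)$, followed by the choice of $b$, is precisely that reasoning spelled out. The constant bookkeeping and the observation that the $9/10$ probability is inherited unchanged from the \textsc{CountSketch} lemma are both handled correctly.
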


For an index set $T\subseteq [n]$, we define $S_T = \sum_{t\in T} |x_t|^p$. From the definition of the set $I$ and $J$, we have that $S_J \le S_I$. For the other direction, we have
\begin{lemma}
It holds that $S_J \ge S_I - O(\eps^{1 - 1/p}) \cdot \norm{x_{-k}}_p^p$.
\end{lemma}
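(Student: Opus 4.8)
The plan is to run the whole argument on the success event of Corollary~\ref{lem:error}, so that $|\hat x_i - x_i|\le \Delta$ for every $i$, where $\Delta := \eps\norm{x_{-k}}_p/(c_1 k^{1/p})$; note this makes $k\Delta^p = \eps^p\norm{x_{-k}}_p^p/c_1^p$, which will be the source of the final bound. If $I=J$ there is nothing to prove, so assume $m := |I\setminus J| = |J\setminus I|\ge 1$ (equality of sizes since $|I|=|J|=k$). Since the contributions of $I\cap J$ cancel, $S_I - S_J = \sum_{i\in I\setminus J}|x_i|^p - \sum_{j\in J\setminus I}|x_j|^p$, so it suffices to upper bound the first sum and lower bound the second. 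Put $M := \max_{i\in I\setminus J}|x_i|$, so trivially $\sum_{i\in I\setminus J}|x_i|^p\le mM^p$.

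The key structural step is to show that no coordinate of $J\setminus I$ is much below $M$. Let $i^\ast\in I\setminus J$ attain $|x_{i^\ast}|=M$. Because $i^\ast\notin J$ and $J$ consists of the $k$ coordinates of largest $|\hat x|$, every $j\in J$ has $|\hat x_j|\ge |\hat x_{i^\ast}|\ge M-\Delta$, hence $|x_j|\ge |\hat x_j|-\Delta\ge M-2\Delta$. Writing $a_+:=\max\{a,0\}$ and applying this to the $m$ coordinates of $J\setminus I$, which all lie outside $I$ and hence in the support of $x_{-k}$, gives simultaneously $\sum_{j\in J\setminus I}|x_j|^p\ge m(M-2\Delta)_+^p$ and $m(M-2\Delta)_+^p\le\norm{x_{-k}}_p^p$. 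Together with the trivial count $m\le k$ we obtain
\[
S_I - S_J\ \le\ m\bigl(M^p-(M-2\Delta)_+^p\bigr),\qquad m(M-2\Delta)_+^p\le\norm{x_{-k}}_p^p,\qquad m\le k .
\]

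It remains to bound $m\bigl(M^p-(M-2\Delta)_+^p\bigr)$ by a case split on $M$ versus $\Delta$. If $M\le 4\Delta$, then $M^p-(M-2\Delta)_+^p\le M^p\le (4\Delta)^p$, so $S_I-S_J\le k(4\Delta)^p = 4^p\eps^p\norm{x_{-k}}_p^p/c_1^p = O(\eps^p)\norm{x_{-k}}_p^p$. If $M>4\Delta$, set $t := M-2\Delta$, so $t>2\Delta>0$ and $M=t+2\Delta\le 2t$; by the mean value theorem, $M^p-t^p\le 2p\Delta M^{p-1}\le 2p\Delta(2t)^{p-1}=2^pp\Delta t^{p-1}$, hence $S_I-S_J\le 2^pp\Delta\cdot m t^{p-1}$. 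Now use the elementary identity $m t^{p-1}=m^{1/p}(mt^p)^{(p-1)/p}$ together with $mt^p\le\norm{x_{-k}}_p^p$ and $m\le k$ to get $mt^{p-1}\le k^{1/p}\norm{x_{-k}}_p^{p-1}$, so $S_I-S_J\le 2^pp\Delta k^{1/p}\norm{x_{-k}}_p^{p-1}=(2^pp/c_1)\eps\norm{x_{-k}}_p^p$. Treating $p>2$ as a constant and using $\eps<1$, both cases yield $S_I-S_J=O(\eps)\norm{x_{-k}}_p^p\le O(\eps^{1-1/p})\norm{x_{-k}}_p^p$, which is the claim (in fact a slightly stronger bound).

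The main obstacle is the regime where $M=\Theta(\Delta)$: there the ``mass'' inequality $m(M-2\Delta)_+^p\le\norm{x_{-k}}_p^p$ is essentially vacuous, so one must fall back on the crude count $m\le k$ and exploit that $k\Delta^p$ is already $O(\eps^p)\norm{x_{-k}}_p^p$. The other delicate point, in the large-$M$ case, is making the factor $k^{1/p}$ produced by splitting $mt^{p-1}$ cancel exactly against the $1/k^{1/p}$ in the Count-Sketch error $\Delta$; this cancellation is what converts a per-coordinate error of size $\Delta$ into an $O(\eps)$ relative error for $S_I-S_J$. One should also fix a tie-breaking convention in the definitions of $I$ and $J$ (top $k$ by absolute value) so that the inequality $|\hat x_j|\ge|\hat x_{i^\ast}|$ for all $j\in J$ holds literally.
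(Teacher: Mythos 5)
Your proof is correct and takes a genuinely different route from the paper's, and in fact yields a sharper bound of $O(\eps)\norm{x_{-k}}_p^p$ rather than the stated $O(\eps^{1-1/p})\norm{x_{-k}}_p^p$.

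The paper partitions $I\setminus J$ into $O(\log(k/\eps))$ levels $T_\ell$ by the magnitude of $|x_i|^p$, bounds the per-index loss in each level via the first-order estimate $|x_i|^p-|x_j|^p\lesssim |x_i|^{p-1}\Delta$, uses the constraint $\sum_j|x_j|^p\le\norm{x_{-k}}_p^p$ to bound $|T_\ell|$, and then sums a geometric-type series over $\ell$; this series is what produces the extra $\eps^{-1/p}$ factor. You instead observe that the single largest displaced value $M$ already forces \emph{every} surviving coordinate of $J$ to satisfy $|x_j|\ge M-2\Delta$, which immediately packages the two constraints $m\le k$ and $m(M-2\Delta)_+^p\le\norm{x_{-k}}_p^p$ into one shot. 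The case split $M\lessgtr 4\Delta$ together with the identity $mt^{p-1}=m^{1/p}(mt^p)^{(p-1)/p}$ then makes the $k^{1/p}$ cancel against the $k^{-1/p}$ in $\Delta$, giving $S_I-S_J=O(\eps)\norm{x_{-k}}_p^p$ with no layering. Beyond being shorter, this matters quantitatively: the paper chooses $b=\Theta(\eps^{-2p/(p-1)}k^{2/p}n^{1-2/p})$ precisely to compensate for the $\eps^{1-1/p}$ loss, and your argument would allow $b=\Theta(\eps^{-2}k^{2/p}n^{1-2/p})$. (The tie-breaking caveat you flag is real but routine and applies equally to the paper's argument.)
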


\begin{proof}
It follows from Lemma~\ref{lem:error} and the definitions of $I$ and $J$ that if some $i\in I$ is replaced by some $j\in J$, we must have (i) $|x_i - x_j| \le \eps \frac{\|x_{-k}\|_p}{c_1 k^{1/p}}$, and (ii) $|x_i|^p \ge |x_j|^p \ge \left| |x_i| - 2\eps \frac{\|x_{-k}\|_p}{c_1 k^{1/p}}\right|^p$.

We claim that $|x_i|^p \leq 2\norm{x_{-k}}_p^p$ for all $i\in I\setminus J$. If not, suppose that $|x_i| > 2^{1/p}\norm{x_{-k}}_p$ for some $i\in I\setminus J$, we would obtain an estimate $\hat x_i$ with $|\hat x_i| > (2^{1/p}-\eps/(c_1 k^{1/p}))\norm{x_{-k}}_p$ by Corollary~\ref{lem:error}. Since $i\not\in J$, the estimates of $x_j$ for every $j\in J$ must be at least $(2^{1/p}-\eps/(c_1 k^{1/p}))\norm{x_{-k}}_p$, which further implies that there exists some $j\not\in I$ such that $|x_j| \geq (2^{1/p}-2\eps/(c_1 k^{1/p}))\norm{x_{-k}}_p$. This contradicts the fact that $\norm{x_{-k}}_p \geq |x_j|$. Hence it must hold that $|x_i|^p\leq 2\norm{x_{-k}}_p^p$ for all $i\in I\setminus J$.

We next decompose $I$ as $I = (I\cap J) \cup T_0 \cup T_1 \cup \cdots \cup T_m \cup T_{m+1},$
where $m = O(\log(k/\eps))$ is such that $2^{m-1} \leq k/(10\eps) < 2^m$,
\[
T_\ell = \left\{i\in I\setminus J:  \frac{\|x_{-k}\|_p^p}{2^\ell}  < |x_i|^p \le \frac{\|x_{-k}\|_p^p}{2^{\ell - 1}}\right\}, \quad \ell = 0, 1, 2, \dots , m
\] 
and
\[
T_{m+1} = \left\{i\in I\setminus J: |x_i|^p \le \frac{10\eps \|x_{-k}\|_p^p}{k}\right\}.
\]

For $T_{m+1}$ we have that 
\[
\sum_{i \in T_{m+1}} |x_i|^p \le k \cdot \frac{10\eps\|x_{-k}\|_p^p}{k} = 10\eps\|x_{-k}\|_p^p.
\]
Next consider $T_\ell$. Note that
\[
\sum_{\substack{i\in T_\ell\\ \text{$i$ is displaced by $j$}}} |x_j|^p \leq \norm{x_{-k}}_p^p,
\]
and it must hold that $|T_\ell| = O(2^\ell)$. 
Suppose that $i \in T_\ell$ is displaced by $j\in J$. It then follows from the above discussion that
\begin{align*}
|x_i|^p - |x_j|^p &\le |x_i|^p - \Abs{ |x_i| -  2\eps \frac{\|x- x_k\|_p}{c_1 k^{1/p}} }^p
\leq |x_i|^{p-1} \cdot 2 p \eps \frac{\|x- x_k\|_p}{c_1 k^{1/p}}\\
&\leq \frac{\|x- x_k\|_p^{p-1}}{2^{(\ell-1)\cdot\frac{p-1}{p}}}\cdot 2 p \eps \frac{\|x- x_k\|_p}{c_1 k^{1/p}}
= 2p \eps \frac{\norm{x_{-k}}_p^p}{c_1 (2^{1-\frac{1}{p}})^{\ell-1} k^{\frac{1}{p}}}
\end{align*}
Taking the sum we get that 
$\sum_{i \in T_\ell} |x_i|^p - \sum_{\substack{i\in T_\ell\\ \text{$i$ is displaced by $j$}}} |x_j|^p \le O(2^\ell) \cdot 2p \eps \frac{\norm{x_{-k}}_p^p}{c_1 (2^{1-\frac{1}{p}})^{\ell-1} k^{\frac{1}{p}}} = O\left(\frac{\eps 2^{\ell/p} \|x_{-k}\|_p^p}{k^{1/p}}\right).$
Summing over $\ell$ yields that $S_I - S_J \le O(\eps^{1-1/p}) \cdot \norm{x_{-k}}_p^p$, which is what we need.
\end{proof}

Suppose that we have $S_I - S_J \le O(\eps) \cdot \norm{x_{-k}}_p^p$ (we normalize $\eps$ here). 
Then this means that $\left|\|x-x_J\|_p^p - \| x_{-k}\|_p^p\right| = \left|\|x-x_J\|_p^p - \| x-x_I\|_p^p\right|  \le \eps \|x_{-k}\|_p^p$. Then, from the discussion above we also have that for each $j \in J$, $|\hat{x}_j - x_j| \le \eps^{p/(p-1)} \frac{\|x_{-k}\|_p}{c_1 k^{1/p}}$, which means that $|\norm{x - \hat{x}_{J}}_p^p - \norm{x - x_J}_p^p|\le k\cdot \frac{\eps^{p/(p-1)} \norm{x_{-k}}_p^p}{c_1^p k} = O(\eps^{p/(p-1)} \norm{x_{-k}}_p^p)$. Therefore,
$\Abs{ \norm{x-\hat{x}_J}_p^p - \norm{x_{-k}}_p^p } = O\left(\eps\norm{x_{-k}}_p^p\right),$
whence we see that any $(1 \pm \eps)$-approximation of $\norm{x - \hat{x}_J}_p^p$ is an $(1\pm O(\eps))$-approximation of $\norm{x_{-k}}_p^p$. We can now conclude with the following theorem.
\begin{theorem}
    There is an algorithm that uses space $\tilde{O}(\eps^{-2p/(p - 1)} k^{2/p} n^{1-2/p})$ and outputs a $(1 \pm \eps)$-approximation of the $k$-residual error $\|x_{-k}\|_p^p$ with high constant probability. 
\end{theorem}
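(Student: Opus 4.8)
The plan is to run Algorithm~\ref{alg:Fp} verbatim, with the \textsc{CountSketch} stage instantiated using $b = \Theta(\eps^{-2p/(p-1)} k^{2/p} n^{1-2/p})$ buckets and $\ell = O(\log n)$ independent rows, and with the subroutine $\mathcal{A}$ being the $(1\pm\eps)$-$\ell_p$-norm estimator of~\cite{AKO11} run in turnstile mode in parallel; the final estimate is the value returned by $\mathcal{A}$ after the clean-up updates $(j,-\hat x_j)_{j\in J}$, raised to the $p$-th power. Correctness is then an assembly of the lemmas already proved, together with the choice of $b$ that makes the exponents line up.

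First I would apply the Lemma of~\cite{MP14} with this $b$: with probability $\ge 9/10$ one has $|\hat x_i - x_i|\le \|x_{-k}\|_2/\sqrt b$ for every $i$, and then H\"older's inequality (exactly as in Corollary~\ref{lem:error}, but with its ``$\eps$'' set to $\eps' := \eps^{p/(p-1)}$, which is what the above choice of $b$ corresponds to) gives the per-coordinate bound $|\hat x_i - x_i|\le \eps'\,\|x_{-k}\|_p/(c_1 k^{1/p})$. Feeding this bound into the displacement lemma yields $S_J \ge S_I - O(\eps'^{\,1-1/p})\|x_{-k}\|_p^p = S_I - O(\eps)\|x_{-k}\|_p^p$ (since $\eps'^{\,1-1/p} = \eps$), while trivially $S_J\le S_I$; because $\|x-x_J\|_p^p = \|x\|_p^p - S_J$ and $\|x_{-k}\|_p^p = \|x\|_p^p - S_I$, this gives $\big|\|x-x_J\|_p^p - \|x_{-k}\|_p^p\big|\le O(\eps)\|x_{-k}\|_p^p$.

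Next I would account for the fact that $\mathcal{A}$ holds $x-\hat x_J$ rather than $x-x_J$. These vectors differ only on the $\le k$ coordinates of $J$, where the discrepancy is $|x_j-\hat x_j|\le \eps'\|x_{-k}\|_p/(c_1 k^{1/p})$, so $\big|\|x-\hat x_J\|_p^p - \|x-x_J\|_p^p\big|\le k\cdot\big(\eps'\|x_{-k}\|_p/(c_1 k^{1/p})\big)^p = \eps'^{\,p}\|x_{-k}\|_p^p/c_1^p = O(\eps)\|x_{-k}\|_p^p$, using $p>2$ so the exponent on $\eps$ exceeds $1$. Combining with the previous paragraph gives $\big|\|x-\hat x_J\|_p^p - \|x_{-k}\|_p^p\big| = O(\eps)\|x_{-k}\|_p^p$. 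Since $\hat x_J$ is a function only of the \textsc{CountSketch} randomness, conditioning on the event above makes $x-\hat x_J$ a fixed vector on which $\mathcal{A}$'s guarantee still holds; by linearity of the two sketches the clean-up updates are legitimate turnstile operations. Thus $\mathcal{A}$ outputs a $(1\pm\eps)$-approximation of $\|x-\hat x_J\|_p$, hence a $(1\pm O(\eps))$-approximation of $\|x-\hat x_J\|_p^p$ after the $p$-th power, hence a $(1\pm O(\eps))$-approximation of $\|x_{-k}\|_p^p$; a union bound over the $\ge 9/10$ success of \textsc{CountSketch} and the success of $\mathcal{A}$ (boosted to high constant probability if needed) and a rescaling of $\eps$ finish correctness. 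The space is $\ell\cdot b + O(\log n) = \tilde O(\eps^{-2p/(p-1)}k^{2/p}n^{1-2/p})$ for the \textsc{CountSketch} table and hashes, plus $\tilde O(n^{1-2/p}/\poly(\eps))$ for $\mathcal{A}$, which is dominated by the first term.

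The only genuinely delicate point is this exponent bookkeeping: $b$ must be chosen so that the per-coordinate \textsc{CountSketch} error, after passing through the $O(\cdot^{\,1-1/p})$ slack of the displacement lemma and, separately, through the $p$-th-power clean-up on $J$, comes out as $O(\eps)$ in both places at once, and $\eps' = \eps^{p/(p-1)}$, i.e.\ $b=\Theta(\eps^{-2p/(p-1)}k^{2/p}n^{1-2/p})$, is precisely the value for which both happen. Everything else --- the validity of composing the two linear sketches, the turnstile clean-up, and converting a $(1\pm\eps)$ norm estimate to a $(1\pm\eps)$ estimate of its $p$-th power --- is routine.
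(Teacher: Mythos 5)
Your proposal matches the paper's proof essentially verbatim: same algorithm, same choice of $b$, the same invocation of the \textsc{CountSketch} error lemma and the displacement lemma with $\eps' = \eps^{p/(p-1)}$, and the same decomposition $\|x-\hat x_J\|_p^p \to \|x-x_J\|_p^p \to \|x_{-k}\|_p^p$. You are in fact slightly more careful than the paper in one spot: where the paper writes the clean-up error as $k\cdot \eps^{p/(p-1)}\|x_{-k}\|_p^p/(c_1^p k)$, the correct $p$-th power gives $\eps'^{\,p}\|x_{-k}\|_p^p/c_1^p$ as you have it (both are $O(\eps)\|x_{-k}\|_p^p$, so this is only a cosmetic fix).
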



\paragraph{$\ell_p$ Sparse Recovery.} Recall that in our algorithm, the vector $\hat{x}_J$ is $k$-sparse and satisfies that $\|x - \hat{x}_J\| \le (1 + \eps)\|x_{-k}\|_p^p$, which means that Algorithm~\ref{alg:Fp} actually solves the $\ell_p$ sparse recovery problem. We have the following theorem.

\begin{theorem}
    There is an algorithm that uses space $\tilde{O}(\eps^{-2p/(p - 1)} k^{2/p} n^{1-2/p})$ and solves the $(1 + \eps)$-$\ell_p$ sparse recovery problem with high constant probability. 
\end{theorem}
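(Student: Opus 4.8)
The plan is to observe that this theorem is essentially already proved by the analysis preceding it: Algorithm~\ref{alg:Fp}, with the bucket count set to $b = \Theta(\eps^{-2p/(p-1)} k^{2/p} n^{1-2/p})$, internally produces the $k$-sparse vector $\hat{x}_J$ supported on the candidate set $J$, and the estimation subroutine $\mathcal{A}$ is only used afterward to turn $\hat{x}_J$ into a scalar estimate. For the sparse recovery problem we simply output $\hat{x}_J$ directly and discard the call to $\mathcal{A}$. So the first step is to record that $|J| = k$, hence $\hat{x}_J$ is $k$-sparse, and that on the success event of \textsc{CountSketch} the chain of inequalities displayed just before the theorem statement already gives $\bigl|\,\norm{x - \hat{x}_J}_p^p - \norm{x_{-k}}_p^p\,\bigr| = O(\eps)\norm{x_{-k}}_p^p$, combining $S_J \le S_I$, the lemma $S_J \ge S_I - O(\eps)\norm{x_{-k}}_p^p$ (after the stated renormalization of $\eps$), and the per-coordinate bound $|\hat{x}_j - x_j| \le \eps^{p/(p-1)}\norm{x_{-k}}_p/(c_1 k^{1/p})$ for $j\in J$ from Corollary~\ref{lem:error}.

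The second step is to pass from this additive guarantee on $p$-th powers to the multiplicative guarantee on the norm required by the problem definition: since $\norm{x-\hat{x}_J}_p^p = (1\pm O(\eps))\norm{x_{-k}}_p^p$, taking $p$-th roots and using $(1+t)^{1/p}\le 1 + t/p$ for $t\ge 0$ together with the matching lower bound yields $\norm{x-\hat{x}_J}_p = (1\pm O(\eps))\norm{x_{-k}}_p$; rescaling $\eps$ by an absolute constant then gives the claimed $(1\pm\eps)$ factor. The third step is the space accounting: dropping $\mathcal{A}$, the algorithm stores only the $\ell\cdot b = O(\log n)\cdot\Theta(\eps^{-2p/(p-1)} k^{2/p} n^{1-2/p})$ \textsc{CountSketch} counters and the $O(\log n)$ pairwise/$4$-wise hash and sign functions, for a total of $\tilde{O}(\eps^{-2p/(p-1)} k^{2/p} n^{1-2/p})$ words, matching the bound of the estimation algorithm. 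The correctness probability is that of \textsc{CountSketch} with $\ell = \Theta(\log n)$ rows (Lemma~\ref{lem:error}, giving $9/10$), which can be boosted to any target constant by enlarging $\ell$ by a constant factor.

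I do not expect a genuine obstacle here, as the theorem is a corollary of the machinery built for the residual-norm estimator. The only mildly delicate point is ensuring the $p$-th-root conversion and the rescaling of $\eps$ are carried out uniformly in $p$: the constants hidden in the $O(\cdot)$ notation do not degrade as $p\to\infty$ since $1/p\le 1$, so the $(1\pm\eps)$ guarantee holds with a single absolute constant in the exponent-rescaling. One may also note in passing that the recovery bound here coincides with the estimation bound, so outputting the vector costs nothing extra in this regime.
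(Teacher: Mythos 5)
Your proposal is correct and matches the paper's own argument, which is itself little more than a remark: Algorithm~\ref{alg:Fp} already materializes the $k$-sparse vector $\hat{x}_J$, and the displayed chain of bounds $\bigl|\,\norm{x-\hat{x}_J}_p^p - \norm{x_{-k}}_p^p\,\bigr| = O(\eps)\norm{x_{-k}}_p^p$ (from $S_J\le S_I$, the lemma $S_J\ge S_I - O(\eps)\norm{x_{-k}}_p^p$ after the $\eps\mapsto\eps^{p/(p-1)}$ rescaling of $b$, and the per-coordinate Count-Sketch error) already certifies $\hat{x}_J$ as the sparse recovery output; the only thing you add is the explicit $p$-th-root conversion and the observation that the $\ell_p$-estimation subroutine $\mathcal{A}$ can be dropped, both of which are correct and, if anything, slightly more careful than what the paper writes.
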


Below we show an $\Omega(k^{2/p} n^{1 - 2/p})$ lower bound for the $\ell_p$ sparse recovery problem with a constant approximation factor. To achieve this, we consider the Gap-infinity problem in~\cite{BJKS04}.

\begin{definition}[Gap-infinity problem, \cite{BJKS04}]
    There are two parties, Alice and Bob, holding vectors $a, b \in \mathbb{Z}^n$ respectively, and their goal is to decide if $\norminf{a - b} \le 1$ or $\norminf{a - b} \ge s$.   
\end{definition}

\begin{theorem}[{\cite{BJKS04}}]
Any protocol that solves the Gap-infinity problem with probability at least $9/10$ must have $\Omega(n / s^2)$ bits of communication.
\end{theorem}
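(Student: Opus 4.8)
The plan is to reconstruct the information-complexity (``information statistics'') argument of \cite{BJKS04}. By Yao's minimax principle it suffices to lower bound the distributional information cost of a deterministic protocol under a carefully chosen hard distribution $\mu$ on pairs $(a,b)$ that puts probability $1/2$ on ``YES'' instances ($\norminf{a-b}\le 1$) and probability $1/2$ on ``NO'' instances ($\norminf{a-b}\ge s$). I would assemble $\mu$ coordinatewise from $n$ independent copies of a single-coordinate \emph{gadget}: each coordinate is drawn from a gadget distribution $\zeta$ on $\{0,\dots,s\}^2$ chosen to be a \emph{mixture of product distributions} indexed by an auxiliary variable $D_i$ (so that, conditioned on $D_i$, the pair $(a_i,b_i)$ has one side frozen and the other spread out), and in a NO instance one uniformly random coordinate $I^\ast$ is overwritten by a ``hard'' pair realizing the gap. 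Because $\zeta$ is a mixture of product distributions, conditioned on $D=(D_1,\dots,D_n)$ the two players can regenerate every coordinate other than a designated one using only private randomness; this ``embedding/collapse'' is exactly what makes the conditional-information-complexity direct-sum theorem of \cite{BJKS04} applicable, giving that any protocol with error $\le 1/10$ must communicate $\Omega(n)$ times the conditional information cost of the single-coordinate gadget.

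The remaining and central step is to show the single-coordinate gadget has conditional information cost $\Omega(1/s^2)$. For this I would use the Hellinger-distance toolkit of \cite{BJKS04}: for a fixed input $(x,y)$ write $\Pi_{x,y}$ for the distribution of the transcript, and recall the cut-and-paste identity $h(\Pi_{x,y},\Pi_{x',y'})=h(\Pi_{x,y'},\Pi_{x',y})$, the Pythagorean-type inequalities for two-party transcripts, and the comparison $I(\Pi;Z\mid D)\gtrsim \mathbb{E}\,h^2(\Pi_Z,\overline{\Pi})$ between mutual information and expected squared Hellinger distance. Correctness of the protocol forces the transcript on a ``null'' input to be $\Omega(1)$-far in Hellinger distance from the transcript on the hard input; I would then chain this across the $\Theta(s)$ intermediate values that the spread-out coordinate can take, so that by the triangle inequality the $\Theta(s)$ consecutive Hellinger distances sum to $\Omega(1)$, and then by Cauchy--Schwarz their squares sum to $\Omega(1/s)$; since the spread-out coordinate is near-uniform over $\Theta(s)$ values, the transcript carries at least a $1/s$ fraction of this sum of squares as information about it, i.e.\ $\Omega(1/s^2)$. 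Putting the two steps together, any protocol solving the Gap-infinity problem with error $\le 1/10$ --- in particular, with success probability $\ge 9/10$ --- must communicate $\Omega(n)\cdot\Omega(1/s^2)=\Omega(n/s^2)$ bits.

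I expect the single-coordinate analysis to be the hard part, for two reasons. First, the gadget distribution $\zeta$ must simultaneously be a mixture of product distributions (so that the direct-sum collapse is valid), spread one coordinate over $\Theta(s)$ levels (so that there is $\Theta(\log s)$ bits of uncertainty for the information bound to bite), and keep the YES promise $\norminf{a-b}\le 1$ intact; reconciling these constraints is precisely what requires care in \cite{BJKS04}. Second, obtaining the tight exponent in $1/s^2$ --- rather than a weaker $1/s$ --- rests entirely on the triangle-inequality-plus-Cauchy--Schwarz step and its quadratic loss, the same quadratic phenomenon that produces the $1/t^2$ factor in the $t$-party set-disjointness bound of \cite{BJKS04}; this is what makes the resulting $\Omega(n/s^2)$ bound match the known $O(n/s^2)$ upper bound on the communication. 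All the Hellinger-distance inequalities invoked (cut-and-paste, the Pythagorean lemma, and the mutual-information/squared-Hellinger comparison) are standard lemmas of the information-statistics framework and would be used as black boxes.
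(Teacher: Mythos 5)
Your proposal is correct and is essentially the argument the paper relies on: the paper states this theorem as a black-box citation of \cite{BJKS04}, and your reconstruction (conditional information complexity with a mixture-of-products hard distribution, the direct-sum collapse, and the single-coordinate $\Omega(1/s^2)$ bound via cut-and-paste, Hellinger chaining with the triangle inequality, and Cauchy--Schwarz) is exactly the information-statistics proof in that source. No gaps worth flagging beyond the single-coordinate details you already identify as the delicate part.
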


The lower bound also holds if we assume there is exactly $1$ or $0$ coordinates $i$ satisfying $|a_i-b_i| \ge s$, each case occurring with constant probability. The work of~\cite{BJKS04} provides an information cost lower bound for this problem, which, together with the direct sum theorem (\cite{CSWY01, BJKS04}), leads to the following corollary. 

\begin{corollary}
    \label{lem:gap}
    Suppose that there are $t$ independent instances of the Gap-infinity problem. If Alice and Bob can solve a constant fraction of the instances with probability at least $9/10$, then they must have $\Omega(tn / s^2)$ bits of communication.  
\end{corollary}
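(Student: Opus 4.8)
The plan is to prove the corollary by a standard \emph{direct sum} argument for conditional information complexity, exactly in the style of \cite{CSWY01,BJKS04}. Let $\mu$ denote the hard input distribution that \cite{BJKS04} uses for a single Gap-infinity instance --- under which, as in the statement, there is exactly $0$ or $1$ coordinate with $|a_i-b_i|\ge s$, each case occurring with constant probability --- and recall that the content of \cite{BJKS04} is the single-instance lower bound $\mathrm{IC}_\mu(\text{Gap-infinity})=\Omega(n/s^2)$, where $\mathrm{IC}_\mu$ is the appropriate conditional information cost (which always lower-bounds communication cost), and the bound is valid for every protocol whose advantage over random guessing under $\mu$ is a positive constant.

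First I would reduce the $t$-instance problem to $t$ single-instance problems. Draw the $t$ instances from the product distribution $\mu^{\otimes t}$, let $P$ be a protocol that with probability $\ge 9/10$ is correct on at least a $\gamma$-fraction of the instances, and let $\Pi$ be its transcript. For each $j\in[t]$ define an embedding protocol $P_j$ for a single instance $(a,b)\sim\mu$: Alice and Bob use shared randomness to sample the ``public part'' and private randomness for the ``private part'' of the other $t-1$ instances from $\mu$, plant $(a,b)$ in coordinate $j$, run $P$, and output the $j$-th answer (or its complement, whichever is more likely correct for coordinate $j$). The split of the auxiliary randomness is exactly the role of the conditioning variable $D$ in the \cite{BJKS04} framework, and a routine chain-rule computation using the independence of the $t$ coordinates under $\mu^{\otimes t}$ gives the super-additivity
\[
\mathrm{CC}(P)\ \ge\ \mathrm{IC}_\mu(P)\ =\ I(\Pi;\, A_1B_1\cdots A_tB_t \mid D)\ \ge\ \sum_{j=1}^{t}\mathrm{IC}_{\mu}(P_j).
\]

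Next I would argue that a constant fraction of the $P_j$ are good enough to invoke the single-instance bound. Since $P$ is correct on $\ge\gamma t$ coordinates with probability $\ge 9/10$, the expected number of correct coordinates is $\ge \tfrac{9}{10}\gamma t$, so by averaging an $\Omega(\gamma)$-fraction of indices $j$ satisfy $\Pr[\text{$P$ correct on coordinate }j]\ge\delta$ for a constant $\delta=\delta(\gamma)$; for each such $j$ the protocol $P_j$ solves Gap-infinity under $\mu$ with a positive constant advantage, hence $\mathrm{IC}_\mu(P_j)=\Omega(n/s^2)$ by \cite{BJKS04}. Summing over the $\Omega(\gamma t)=\Omega(t)$ good indices yields $\mathrm{CC}(P)=\Omega(tn/s^2)$.

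The main obstacle is the averaging step when $\gamma$ is only a small constant, since then $\delta$ may fall below $1/2$ and the embedded $P_j$ need not have positive advantage. This is a non-issue in our application, where the reduction from $\ell_p$ sparse recovery forces the algorithm to be correct on all but a vanishing fraction of the $t$ planted coordinates, so $\gamma$ can be taken arbitrarily close to $1$ and the averaging is immediate; for the general statement one uses that the \cite{BJKS04} bound degrades only polynomially in the advantage, so any constant $\gamma$ still gives $\Omega(tn/s^2)$ with the hidden constant depending on $\gamma$. The remaining ingredients --- the precise conditioning variable $D$, the embedding/collapse lemma that lets $P_j$ run with no communication overhead, and the super-additivity of conditional information cost for product distributions --- are exactly as in \cite{BJKS04,CSWY01} and need no new ideas.
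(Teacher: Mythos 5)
Your proposal correctly fleshes out the paper's one-line justification, which simply cites the information cost lower bound of \cite{BJKS04} together with the direct sum machinery of \cite{CSWY01, BJKS04}; the embedding, chain-rule super-additivity of conditional information cost, and averaging argument you sketch is precisely how those references implement that plan. Your caveat about the averaging step when the solved fraction is only a small constant is fair, and, as you observe, immaterial in the sparse-recovery application where that fraction is near one.
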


Now we are ready to prove a bit lower bound for the $\ell_p$-sparse recovery problem.

\begin{theorem}
    Suppose that $c$ is a sufficiently small constant. Any algorithm that solves the $(1 +c)$ $\ell_p$ sparse recovery problem with high constant probability requires $\Omega(k^{2/p} n^{1 - 2/p})$ bits of space.
\end{theorem}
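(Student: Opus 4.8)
The plan is to reduce from $t=k$ independent copies of the Gap-infinity problem and invoke Corollary~\ref{lem:gap}. Partition $[n]$ into $k$ blocks $B_1,\dots,B_k$ of size $m=n/k$ each, and have Alice and Bob hold vectors $a,b\in\mathbb{Z}^n$ whose restrictions to $B_i$ form the $i$-th Gap-infinity instance, drawn independently from the hard distribution of~\cite{BJKS04} with gap $s$; in the refined version each instance independently either has one ``spike'' coordinate $j$ with $|a_j-b_j|\ge s$ or none, each with constant probability, and all other coordinates satisfy $|a_j-b_j|\le 1$. Set $x=a-b$ and choose $s=\Theta((n/k)^{1/p})$ with a large hidden constant (we may assume $n\ge Ck$ for a large constant $C$, since otherwise $k^{2/p}n^{1-2/p}=O(k)$ and an $\Omega(k)$ bound is immediate from having to describe a $k$-sparse output). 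Given a streaming $(1+c)$-$\ell_p$ sparse recovery algorithm using $S$ bits, Alice runs it on the updates $(i,a_i)$, sends the $S$-bit memory state to Bob, who continues with the updates $(i,-b_i)$ and reads off a $k$-sparse $\hat{x}$ with $\|x-\hat{x}\|_p^p\le(1+c)^p\|x_{-k}\|_p^p$. Since there are at most $k$ spikes, $\|x_{-k}\|_p\le\|x-x_\Sigma\|_p$ where $\Sigma$ is the spike set, and $x-x_\Sigma$ has every entry of magnitude at most $1$, so $\|x_{-k}\|_p^p\le n$ and $\|x-\hat{x}\|_p^p=O(n)$.

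Next I would decode the Gap-infinity answers from $\hat{x}$: Bob declares block $B_i$ a ``$\ge s$'' instance iff $\max_{j\in B_i}|\hat{x}_j|\ge s/2$. If a true spike $j_i$ (with $|x_{j_i}|\ge s$) is declared ``$\le 1$'', then $|x_{j_i}-\hat{x}_{j_i}|\ge s/2$, so this coordinate alone contributes at least $(s/2)^p$ to $\|x-\hat{x}\|_p^p$; if a spike-free block is declared ``$\ge s$'', some $j$ with $|x_j|\le 1$ has $|\hat{x}_j|\ge s/2$, contributing at least $(s/2-1)^p\ge(s/4)^p$ (using $s\ge 4$). As the total error is $O(n)$, the number of misdecoded instances is $O(n/s^p)=O(k)$, with a constant that drops below any prescribed $\delta$ once the hidden constant in $s^p=\Theta(n/k)$ is large enough. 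Since a constant fraction of the $k$ instances have a spike (Chernoff bound), Bob correctly solves a constant fraction of the $k$ instances whenever the recovery algorithm succeeds, i.e.\ with probability at least $9/10$. Corollary~\ref{lem:gap} with $t=k$, per-instance dimension $m=n/k$, and gap $s$ then yields $S=\Omega(tm/s^2)=\Omega\!\left(k\cdot(n/k)^{1-2/p}\right)=\Omega(k^{2/p}n^{1-2/p})$, as claimed.

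I expect the calibration of $s$ to be the main point to get right: $s$ must be large enough that a single missed spike is prohibitively expensive against the $O(n)$ error budget, forcing all but an arbitrarily small fraction of spikes to be recovered, while still lying in the regime $2\le s\le\sqrt{m}$ in which the Gap-infinity bound $\Omega(m/s^2)$ is non-trivial — and it is precisely the hypothesis $p>2$ that makes $s=\Theta((n/k)^{1/p})\le\sqrt{n/k}$ fit in this regime. A minor but convenient observation is that Bob only needs the binary Gap-infinity answer, not the spike's location, so a spike recovered at the wrong coordinate inside the correct block causes no error in the reduction.
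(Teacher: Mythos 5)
Your proof takes essentially the same approach as the paper's: reduce from $k$ independent Gap-infinity instances of block length $n/k$ with gap $s=\Theta((n/k)^{1/p})$, observe that a misdecoded instance contributes $\Omega(s^p)=\Omega(n/k)$ to the $\ell_p^p$ recovery error while the total error budget is $O(n)$, conclude that all but a small constant fraction of the $k$ instances are decoded correctly, and then invoke Corollary~\ref{lem:gap} to get $\Omega(k\cdot(n/k)/s^2)=\Omega(k^{2/p}n^{1-2/p})$. The additional details you spell out (the explicit streaming-to-communication reduction, the calibration of the hidden constant in $s$, and the observation that only the binary Gap-infinity answer per block is needed) are sensible elaborations of the same argument rather than a different route.
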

\begin{proof}
We reduce the multi-instance Gap-infinity problem to the $\ell_p$ sparse recovery problem. 
Suppose that there are $k$ instances of the Gap-infinity problem $(a^i, b^i)$ with length $n/k$ and $s = (n/k)^{1/p}$. Let $z^i = a^i - b^i$ ($i  \in [n/k]$) and $x \in \mathbb{Z}^n$ be the resulting vector after concatenating all $z^i$. Suppose that $\hat{x}$ is a $k$-sparse vector which satisfies $\|x- \hat{x}\|_p^p \le (1 +c )\|x - x_k\|_p^p$, where $c$ is a sufficiently small constant. We shall show how we can solve, using $\hat{x}$, a constant fraction of copies of the Gap-infinity problem on $(a^i, b^i)$. As mentioned above, we can assume for each instance $(a^i, b^i)$ that there is exactly $1$ or no coordinates $j$ satisfying $|(a^i)_j - (b^i)_j| \ge s$, each case occurring with constant probability. Under this assumption, with high constant probability, there are $\Theta(k)$ instances for which $\|a^i - b^i\|_\infty = s$ and $\|x_{-k}\|_p^p = O(n)$. Note that for a coordinate $x_i$, if $|x_i - \hat{x}_i| \ge s / 2$, it will contribute $\Omega(n/k)$ to $\|x- \hat{x}\|_p^p$. Thus, for a $(1 + c)$-approximation, this event can only happen at most $O(k)$ times, which means that from the solution $\hat{x}$, we can solve a constant fraction of the instances $(a^i, b^i)$. The theorem then follows from Corollary~\ref{lem:gap}.  
\end{proof}

We have shown a lower bound in terms of total bits of space. We next show that such a lower bound can be converted to a sketching dimension lower bound, for which we need the following lemma.

\begin{lemma}[{\cite[Lemma 5.2]{pw11}}]
    A lower bound of $\Omega(b)$ bits for the sparse recovery bit scheme implies a lower bound of $\Omega(b/\log n)$ for regular sparse recovery with failure probability $\delta - 1/n$. \footnote{The theory in \cite{pw11} is for $p \le 2$, however, the argument still goes through for constant $p > 2$ unchanged.}
\end{lemma}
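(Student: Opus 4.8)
The plan is to prove the contrapositive: starting from a ``regular'' sparse-recovery scheme that uses a real measurement matrix $\Phi \in \R^{m \times n}$ together with an arbitrary recovery map $R$ and that succeeds with failure probability $\delta - 1/n$, I will build a bit scheme that uses only $O(m\log n)$ bits and succeeds with failure probability $\delta$. Granting the hypothesized $\Omega(b)$-bit lower bound for bit schemes, this forces $m\log n = \Omega(b)$, i.e.\ $m = \Omega(b/\log n)$. First I would normalize: by replacing $\Phi$ with $U\Phi$ for a suitable invertible $U$ and folding $U$ into the recovery map, we may assume $\Phi$ has orthonormal rows; this changes neither $m$ nor the success guarantee, and now every entry of $\Phi$ lies in $[-1,1]$.

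Next I would discretize. Let $\Phi'$ be obtained from $\Phi$ by rounding each entry to the nearest integer multiple of $\eta = n^{-C}$ for a large constant $C$. On any admissible input $x$ --- an integer vector with $\norminf{x} \le \poly(n)$, as in the bit-scheme model --- the vector $\Phi' x$ consists of $m$ numbers that are multiples of $\eta$ and bounded in absolute value by $\poly(n)$, hence is describable in $O(m\log n)$ bits. The bit scheme is then: using public randomness to fix $\Phi$, the encoder outputs these $O(m\log n)$ bits; the decoder reconstructs $\Phi' x$, adds fresh Gaussian noise $w \sim N(0,\sigma^2 I_m)$ with $\eta \ll \sigma \le n^{-2}$, and returns $R(\Phi,\, \Phi' x + w)$.

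For correctness I would argue robustness via noise injection. Since $\norm{(\Phi - \Phi')x}_2 \le \norm{\Phi - \Phi'}_F \norm{x}_2 \le \sqrt{mn}\,\eta \cdot \poly(n) \le \sigma/n$ for $C$ large enough, the standard bound on the total variation distance between two Gaussians with common covariance and a small mean shift gives $d_{TV}(\Phi' x + w,\ \Phi x + w) \le 1/n$. Because the rows of $\Phi$ are orthonormal, $\Phi x + w = \Phi(x + g)$ where $g = \Phi^\top w$ is a Gaussian supported on the row space with $\norm{g}_p \le n^{1/p}\sigma$ with overwhelming probability; thus running $R$ on $\Phi x + w$ is exactly running the regular scheme on the perturbed input $x+g$. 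Averaging the per-input guarantee over the law of $g$, with probability $\ge 1 - (\delta - 1/n)$ we obtain a $k$-sparse $\hat x$ with $\norm{(x+g) - \hat x}_p \le (1 + c)\norm{(x+g)_{-k}}_p$, and since $\norm{g}_p$ is negligible compared to the polynomially large residual on the hard instances, this yields $\norm{x - \hat x}_p \le (1 + O(c))\norm{x_{-k}}_p$. Replacing $\Phi x + w$ by $\Phi' x + w$ costs at most an additional $1/n$ in failure probability, so the bit scheme fails with probability at most $(\delta - 1/n) + 1/n = \delta$ while using $O(m\log n)$ bits, completing the reduction. Nothing here uses $p \le 2$: norm monotonicity and the Gaussian facts hold for every constant $p \ge 1$, which justifies the footnote.

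The main obstacle is the robustness step: the recovery map $R$ is only promised to work on exact measurements $\Phi x$, so one must show it still works on a rounded, noise-perturbed sketch. The Gaussian-injection trick is what makes this go through, but it requires (i) choosing $\eta$ and $\sigma$ consistently so that the mean shift is $\ll \sigma$ while $\sigma$ is still small enough not to degrade the approximation factor, and (ii) handling inputs whose residual $\norm{x_{-k}}_p$ is tiny (e.g.\ exactly $k$-sparse $x$), which is why the reduction is run against the specific hard instances from the communication lower bound, all of which have polynomially large residual error. Both points are routine but must be tracked carefully.
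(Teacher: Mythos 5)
Your proposal correctly reconstructs the argument from Price and Woodruff (Lemma 5.2 of \cite{pw11}), which the paper cites rather than reproves. The key ingredients you identify are exactly those in the cited source: orthonormalize $\Phi$ so that rows have bounded entries and the adjoint acts isometrically, round the matrix to $\eta=n^{-\Theta(1)}$ precision so that $\Phi' x$ fits in $O(m\log n)$ bits, and inject Gaussian noise $w$ at scale $\sigma$ with $\eta\ll\sigma\ll 1$ so that (a) the total variation distance between the rounded and exact noised sketches is $O(1/n)$, and (b) the noise can be folded back into the input as $x\mapsto x+\Phi^\top w$, on which the regular recovery guarantee applies. The contrapositive reduction and the resulting $m=\Omega(b/\log n)$ bound match, as does the bookkeeping that shifts the failure probability by exactly $1/n$. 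Your final remark that nothing uses $p\le 2$ — only that $\norm{g}_p$ is polynomially small relative to the (polynomially large) residual of the hard instances, which holds for any constant $p\ge 1$ — is precisely what justifies the paper's footnote.
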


Our theorem follows immediately.

\begin{theorem}
    Suppose that $c$ is a sufficiently small constant. Any algorithm that solves the $(1 +c)$ $\ell_p$ sparse recovery problem with high constant probability requires $\tilde{\Omega}(k^{2/p} n^{1 - 2/p})$ measurements.
\end{theorem}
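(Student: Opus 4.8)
The plan is to combine the bit-complexity lower bound just established — namely that any algorithm solving the $(1+c)$-$\ell_p$ sparse recovery problem with high constant probability requires $\Omega(k^{2/p}n^{1-2/p})$ bits of space — with the generic bit-scheme-to-measurement reduction of \cite{pw11} quoted in the lemma immediately above. Concretely, I would set $b = \Theta(k^{2/p}n^{1-2/p})$ in that lemma, so that the $\Omega(b)$-bit lower bound for the sparse recovery bit scheme yields an $\Omega(b/\log n)$ lower bound on the number of linear measurements for the standard (real-valued) sparse recovery problem, with the failure probability degrading only from $\delta$ to $\delta - 1/n$, which is still a high constant. Since $b/\log n = \Omega(k^{2/p}n^{1-2/p}/\log n) = \tilde\Omega(k^{2/p}n^{1-2/p})$, the claimed measurement lower bound follows.

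The only point requiring care is that the reduction of \cite{pw11} was originally stated for $p \le 2$, whereas here $p > 2$; this is flagged in the footnote to the lemma. So before invoking it I would briefly check that the discretization argument it relies on is insensitive to the value of $p$: one rounds the entries of the (WLOG orthonormal) sketching matrix and the input vector to $O(\log n)$ bits of precision, and argues that this perturbs $\|x-\hat x\|_p$ by at most a $(1+o(1))$ factor relative to $\|x_{-k}\|_p$, so that a $(1+c/2)$-approximate recovery in the discretized bit scheme gives a $(1+c)$-approximate recovery for the real problem. The norm-perturbation bounds used there (a rounding error of magnitude $n^{-\Theta(1)}$ per coordinate, summed in $\ell_p$) behave the same way for any fixed $p>2$, since $\ell_p \le \ell_1$ and the relevant polynomial-in-$n$ slack dominates; the number of measurements is unchanged by rounding, so a $t$-measurement real algorithm becomes a $t$-measurement, $O(t\log n)$-bit scheme, and contraposing gives the stated implication.

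With that verification in hand the theorem is immediate: assume for contradiction an algorithm using $t = o(k^{2/p}n^{1-2/p}/\log n)$ measurements solves $(1+c)$-$\ell_p$ sparse recovery with high constant probability; the reduction turns it into a bit scheme using $O(t\log n) = o(k^{2/p}n^{1-2/p})$ bits solving $(1+c')$-$\ell_p$ sparse recovery (for a suitable constant $c'$) with high constant probability, contradicting the bit lower bound. I do not anticipate a substantive obstacle here — the work is entirely in confirming that the black-box reduction of \cite{pw11} transfers to $p>2$, which it does because it is purely a discretization/precision argument and never uses $p\le 2$.
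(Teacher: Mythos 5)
Your proposal is correct and takes essentially the same route as the paper: it combines the $\Omega(k^{2/p}n^{1-2/p})$ bit lower bound with the bit-scheme-to-measurement reduction of \cite[Lemma 5.2]{pw11}, which the paper invokes with the single remark ``Our theorem follows immediately.'' The extra care you take in verifying that the discretization argument of \cite{pw11} is insensitive to $p>2$ is precisely the content of the paper's footnote, so no substantive difference exists.
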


\section{Experiments}

In this section, we consider experiments for the low-rank residual error estimation problem on real-world datasets. All of our experiments were done in Python and conducted on a device with a 3.30GHz CPU and 16GB RAM. We will use the following dataset.

\begin{itemize}[itemsep=0pt,topsep=0pt,parsep=0pt,leftmargin=0.5cm]
    \item \textbf{KOS data.}\footnote{The \href{https://archive.ics.uci.edu/ml/datasets/Bag+of+Words}{Bag of Words Data Set} from the UCI Machine Learning Repository.} A word frequency dataset. The matrix represents word frequencies in blogs and has dimensions 3430 × 6906 with 353160 non-zero entries.
    
    \item \textbf{MovieLens 100K.}~\cite{movielen} A movie ratings dataset, which consists of a preference matrix with 100,000 ratings from 611 users across 9,724 movies.
\end{itemize}

As discussed in the previous section, our algorithm follows the same framework as that in~\cite{AN13}: compute the sketch $SAT$ and then compute the rank-$k$ residual error on $SAT$: $\|SAT - [SAT]_k\|_F$. However, the work of~\cite{AN13} only gives an $O(k^2 / \eps^6)$ upper bound for this bilinear sketch, while we have shown any sketch matrices with the rank-$k$ projection-cost preserving property suffice, which allows for an $O(k^2/\eps^4)$ upper bound and the use of extremely sparse sketching matrices with this size.

\begin{table}[t]
\caption{Performance of our algorithm and \cite{AN13} on MovieLens 100K data and KOS data, respectively.}
\label{tab:movie}
\centering
\scriptsize 
\begin{tabular}{|c|c|c|c|}
\hline
& $k=5$ & $k=10$ & $k=20$ \\
\hline
$\eps$(Ours, $m = 50$) & 0.146 & 0.295 & 0.545 \\
\hline
$\eps$(Ours, $m = 100$) & 0.074 & 0.149 & 0.292 \\
\hline
$\eps$(AN13, $m = 50$) & 0.135 & 0.287 & 0.541 \\
\hline
$\eps$(AN13, $m = 100$) & 0.070 & 0.149 & 0.288 \\
\hline
\hline
 Runtime  & \multirow{2}{*}{0.105s} & \multirow{2}{*}{0.102s} & \multirow{2}{*}{0.105s} \\
(Ours, $m = 50$) & & & \\
\hline
Runtime & \multirow{2}{*}{0.106s} & \multirow{2}{*}{0.105s} & \multirow{2}{*}{0.109s} \\
(Ours, $m = 100$) & & & \\
\hline
Runtime& \multirow{2}{*}{0.377s} & \multirow{2}{*}{0.381s} & \multirow{2}{*}{0.388s} \\
(AN13, $m = 50$) & & & \\
\hline
Runtime & \multirow{2}{*}{0.735s} & \multirow{2}{*}{0.735s} & \multirow{2}{*}{0.728s} \\
(AN13, $m = 100$)& & & \\
\hline
Streaming LRA & \multirow{2}{*}{0.141s} & \multirow{2}{*}{0.149s} & \multirow{2}{*}{0.151s}\\
($m = 50$) & & &\\ 
\hline
Randomized SVD & 0.056s & 0.061s & 0.075s\\
\hline
SVD & \multicolumn{3}{c|}{1.180s}\\ 
\hline
\end{tabular}
\quad
\begin{tabular}{|c|c|c|c|}
\hline
& $k=5$ & $k=10$ & $k=20$ \\
\hline
$\eps$(Ours, $m = 50$) & 0.135  & 0.287 & 0.543\\
\hline
$\eps$(Ours, $m = 100$)  & 0.068 & 0.140 & 0.279\\
\hline
$\eps$(AN13, $m = 50$) & 0.141 & 0.288 & 0.540\\
\hline
$\eps$(AN13, $m = 100$)  & 0.067 & 0.138 & 0.286\\
\hline
\hline
 Runtime  & \multirow{2}{*}{0.117s} & \multirow{2}{*}{0.114s} & \multirow{2}{*}{0.120s} \\
(Ours, $m = 50$) & & & \\
\hline
Runtime & \multirow{2}{*}{0.126s} & \multirow{2}{*}{0.130s} & \multirow{2}{*}{0.123s} \\
(Ours, $m = 100$) & & & \\
\hline
Runtime& \multirow{2}{*}{0.399s} & \multirow{2}{*}{0.414s} & \multirow{2}{*}{0.397s} \\
(AN13, $m = 50$) & & & \\
\hline
Runtime & \multirow{2}{*}{0.747s} & \multirow{2}{*}{0.744s} & \multirow{2}{*}{0.744s} \\
(AN13, $m = 100$)& & & \\
\hline
Streaming LRA & \multirow{2}{*}{0.651s} & \multirow{2}{*}{0.657s} & \multirow{2}{*}{0.660s}\\
($m = 50$) & & &\\ 
\hline
Randomized SVD & 0.199s & 0.218s & 0.230s\\
\hline
Runtime of SVD & \multicolumn{3}{c|}{22.070s}\\
\hline
\end{tabular}
\end{table}
The result is shown in Table~\ref{tab:movie}. 
We define the error $\eps = (\text{Output of the Algorithm}) / \|A - A_k\|_F - 1$, and take an average over $10$ independent trials. Since the regime of interest is $k \ll n, d$, we vary $k$ among $\{5, 10, 20\}$ and set the sketching size to be $m = 50, 100$. For the work of~\cite{AN13}, we set the matrices $S, T$ to be random Gaussian matrices and for ours we set the $S, T$ to be the OSNAP matrices~\cite{NN13,c16} with $s = 2$. The result shows that while the error of the two ways is almost the same, the runtime of ours is about $4$- to $7$-fold faster than the algorithm in~\cite{AN13}. This is because, in our algorithm, the sketching matrix is extremely sparse, where each column has only $O(1)$ non-zero entries.  

\section*{Acknowledgment}
Y. Li is supported in part by the Singapore Ministry of Education (AcRF) Tier 2 grant MOE-T2EP20122-0001 and Tier 1 grant RG75/21. H. Lin and D. Woodruff would like to thank support from the National Institute of Health (NIH) grant 5R01 HG 10798-2. D.W. also did part of this work while visiting the Simons Institute for the Theory of Computing. 

\bibliography{reference}

\newcommand{\etalchar}[1]{$^{#1}$}
\begin{thebibliography}{CEM{\etalchar{+}}15}

\bibitem[AKO11]{AKO11}
Alexandr Andoni, Robert Krauthgamer, and Krzysztof Onak.
\newblock Streaming algorithms via precision sampling.
\newblock In Rafail Ostrovsky, editor, {\em {IEEE} 52nd Annual Symposium on Foundations of Computer Science, {FOCS} 2011, Palm Springs, CA, USA, October 22-25, 2011}, pages 363--372. {IEEE} Computer Society, 2011.

\bibitem[AMS99]{AMS99}
Noga Alon, Yossi Matias, and Mario Szegedy.
\newblock The space complexity of approximating the frequency moments.
\newblock {\em J. Comput. Syst. Sci.}, 58(1):137--147, 1999.

\bibitem[AN13]{AN13}
Alexandr Andoni and Huy~L. Nguyen.
\newblock Eigenvalues of a matrix in the streaming model.
\newblock In Sanjeev Khanna, editor, {\em Proceedings of the Twenty-Fourth Annual {ACM-SIAM} Symposium on Discrete Algorithms, {SODA} 2013, New Orleans, Louisiana, USA, January 6-8, 2013}, pages 1729--1737. {SIAM}, 2013.

\bibitem[BICS10]{berinde}
Radu Berinde, Piotr Indyk, Graham Cormode, and Martin~J. Strauss.
\newblock Space-optimal heavy hitters with strong error bounds.
\newblock {\em {ACM} Trans. Database Syst.}, 35(4):26:1--26:28, 2010.

\bibitem[BIPW10]{BIPW10}
Khanh~Do Ba, Piotr Indyk, Eric Price, and David~P. Woodruff.
\newblock Lower bounds for sparse recovery.
\newblock In Moses Charikar, editor, {\em Proceedings of the Twenty-First Annual {ACM-SIAM} Symposium on Discrete Algorithms, {SODA} 2010, Austin, Texas, USA, January 17-19, 2010}, pages 1190--1197. {SIAM}, 2010.

\bibitem[BJKS04]{BJKS04}
Ziv Bar{-}Yossef, T.~S. Jayram, Ravi Kumar, and D.~Sivakumar.
\newblock An information statistics approach to data stream and communication complexity.
\newblock {\em J. Comput. Syst. Sci.}, 68(4):702--732, 2004.

\bibitem[CEM{\etalchar{+}}15]{CEM+15}
Michael~B. Cohen, Sam Elder, Cameron Musco, Christopher Musco, and Madalina Persu.
\newblock Dimensionality reduction for k-means clustering and low rank approximation.
\newblock In Rocco~A. Servedio and Ronitt Rubinfeld, editors, {\em Proceedings of the Forty-Seventh Annual {ACM} on Symposium on Theory of Computing, {STOC} 2015, Portland, OR, USA, June 14-17, 2015}, pages 163--172. {ACM}, 2015.

\bibitem[Coh16]{c16}
Michael~B. Cohen.
\newblock Nearly tight oblivious subspace embeddings by trace inequalities.
\newblock In Robert Krauthgamer, editor, {\em Proceedings of the Twenty-Seventh Annual {ACM-SIAM} Symposium on Discrete Algorithms, {SODA} 2016, Arlington, VA, USA, January 10-12, 2016}, pages 278--287. {SIAM}, 2016.

\bibitem[CSWY01]{CSWY01}
Amit Chakrabarti, Yaoyun Shi, Anthony Wirth, and Andrew~Chi{-}Chih Yao.
\newblock Informational complexity and the direct sum problem for simultaneous message complexity.
\newblock In {\em 42nd Annual Symposium on Foundations of Computer Science, {FOCS} 2001, 14-17 October 2001, Las Vegas, Nevada, {USA}}, pages 270--278. {IEEE} Computer Society, 2001.

\bibitem[CW09]{cw09}
Kenneth~L. Clarkson and David~P. Woodruff.
\newblock Numerical linear algebra in the streaming model.
\newblock In Michael Mitzenmacher, editor, {\em Proceedings of the 41st Annual {ACM} Symposium on Theory of Computing, {STOC} 2009, Bethesda, MD, USA, May 31 - June 2, 2009}, pages 205--214. {ACM}, 2009.

\bibitem[CW13]{CW13}
Kenneth~L. Clarkson and David~P. Woodruff.
\newblock Low rank approximation and regression in input sparsity time.
\newblock In Dan Boneh, Tim Roughgarden, and Joan Feigenbaum, editors, {\em Symposium on Theory of Computing Conference, STOC'13, Palo Alto, CA, USA, June 1-4, 2013}, pages 81--90. {ACM}, 2013.

\bibitem[GW18]{GW18}
Sumit Ganguly and David~P. Woodruff.
\newblock High probability frequency moment sketches.
\newblock In Ioannis Chatzigiannakis, Christos Kaklamanis, D{\'{a}}niel Marx, and Donald Sannella, editors, {\em 45th International Colloquium on Automata, Languages, and Programming, {ICALP} 2018, July 9-13, 2018, Prague, Czech Republic}, volume 107 of {\em LIPIcs}, pages 58:1--58:15. Schloss Dagstuhl - Leibniz-Zentrum f{\"{u}}r Informatik, 2018.

\bibitem[HJ12]{HJ12}
Roger~A. Horn and Charles~R. Johnson.
\newblock {\em Matrix Analysis}.
\newblock Cambridge University Press, 2 edition, 2012.

\bibitem[HK16]{movielen}
F.~Maxwell Harper and Joseph~A. Konstan.
\newblock The movielens datasets: History and context.
\newblock {\em {ACM} Trans. Interact. Intell. Syst.}, 5(4):19:1--19:19, 2016.

\bibitem[HNO08]{HNO08}
Nicholas J.~A. Harvey, Jelani Nelson, and Krzysztof Onak.
\newblock Sketching and streaming entropy via approximation theory.
\newblock In {\em 49th Annual {IEEE} Symposium on Foundations of Computer Science, {FOCS} 2008, October 25-28, 2008, Philadelphia, PA, {USA}}, pages 489--498. {IEEE} Computer Society, 2008.

\bibitem[Ind04]{i04}
Piotr Indyk.
\newblock Algorithms for dynamic geometric problems over data streams.
\newblock In L{\'{a}}szl{\'{o}} Babai, editor, {\em Proceedings of the 36th Annual {ACM} Symposium on Theory of Computing, Chicago, IL, USA, June 13-16, 2004}, pages 373--380. {ACM}, 2004.

\bibitem[IPW11]{IPW11}
Piotr Indyk, Eric Price, and David~P. Woodruff.
\newblock On the power of adaptivity in sparse recovery.
\newblock In Rafail Ostrovsky, editor, {\em {IEEE} 52nd Annual Symposium on Foundations of Computer Science, {FOCS} 2011, Palm Springs, CA, USA, October 22-25, 2011}, pages 285--294. {IEEE} Computer Society, 2011.

\bibitem[LW16]{LW16}
Yi~Li and David~P. Woodruff.
\newblock Tight bounds for sketching the operator norm, schatten norms, and subspace embeddings.
\newblock In Klaus Jansen, Claire Mathieu, Jos{\'{e}} D.~P. Rolim, and Chris Umans, editors, {\em Approximation, Randomization, and Combinatorial Optimization. Algorithms and Techniques, {APPROX/RANDOM} 2016, September 7-9, 2016, Paris, France}, volume~60 of {\em LIPIcs}, pages 39:1--39:11, 2016.

\bibitem[MM20]{MM20}
Cameron Musco and Christopher Musco.
\newblock Projection-cost-preserving sketches: Proof strategies and constructions.
\newblock {\em CoRR}, abs/2004.08434, 2020.

\bibitem[MP14]{MP14}
Gregory~T. Minton and Eric Price.
\newblock Improved concentration bounds for count-sketch.
\newblock In Chandra Chekuri, editor, {\em Proceedings of the Twenty-Fifth Annual {ACM-SIAM} Symposium on Discrete Algorithms, {SODA} 2014, Portland, Oregon, USA, January 5-7, 2014}, pages 669--686. {SIAM}, 2014.

\bibitem[NN12]{nn12}
Jelani Nelson and Huy~L. Nguyen.
\newblock {OSNAP:} faster numerical linear algebra algorithms via sparser subspace embeddings.
\newblock {\em CoRR}, abs/1211.1002, 2012.

\bibitem[NN13]{NN13}
Jelani Nelson and Huy~L. Nguyen.
\newblock {OSNAP:} faster numerical linear algebra algorithms via sparser subspace embeddings.
\newblock In {\em 54th Annual {IEEE} Symposium on Foundations of Computer Science, {FOCS} 2013, 26-29 October, 2013, Berkeley, CA, {USA}}, pages 117--126. {IEEE} Computer Society, 2013.

\bibitem[PW11]{pw11}
Eric Price and David~P. Woodruff.
\newblock (1+eps)-approximate sparse recovery.
\newblock {\em CoRR}, abs/1110.4414, 2011.

\bibitem[Ver18]{vershynin}
Roman Vershynin.
\newblock {\em High-Dimensional Probability: An Introduction with Applications in Data Science}.
\newblock Cambridge Series in Statistical and Probabilistic Mathematics. Cambridge University Press, 2018.

\bibitem[Woo14a]{w14b}
David~P. Woodruff.
\newblock Low rank approximation lower bounds in row-update streams.
\newblock In Zoubin Ghahramani, Max Welling, Corinna Cortes, Neil~D. Lawrence, and Kilian~Q. Weinberger, editors, {\em Advances in Neural Information Processing Systems 27: Annual Conference on Neural Information Processing Systems 2014, December 8-13 2014, Montreal, Quebec, Canada}, pages 1781--1789, 2014.

\bibitem[Woo14b]{w14}
David~P. Woodruff.
\newblock Sketching as a tool for numerical linear algebra.
\newblock {\em Found. Trends Theor. Comput. Sci.}, 10(1-2):1--157, 2014.

\end{thebibliography}
\bibliographystyle{alpha}

\end{document}